\newtheorem{theorem}{Theorem}[section]
\newtheorem{corollary}[theorem]{Corollary}
\newtheorem{proposition}[theorem]{Proposition}
\newtheorem{definition}{Definition}[section]
\newtheorem{remark}{Remark}[section]
\newtheorem{problem}{Problem}
\newcommand{\braket}[2]{\left< #1 \vphantom{#2} \middle| #2 \vphantom{#1} \right>} 
\newcommand{\ketbra}[2]{\ensuremath{\ket{#1}\!\bra{#2}}}
\DeclarePairedDelimiter\rbra{\lparen}{\rparen}
\DeclarePairedDelimiter\sbra{\lbrack}{\rbrack}
\DeclarePairedDelimiter\cbra{\{}{\}}
\DeclarePairedDelimiter\abs{\lvert}{\rvert}
\DeclarePairedDelimiter\Abs{\lVert}{\rVert}
\DeclarePairedDelimiter\ket{\lvert}{\rangle}
\DeclarePairedDelimiter\bra{\langle}{\rvert}
\newcommand{\tr} {\operatorname{tr}}
\newcommand{\poly} {\operatorname{poly}}
\newcommand{\polylog} {\operatorname{polylog}}
\newcommand{\rank} {\operatorname{rank}}
\newcommand{\footremember}[2]{%
    \footnote{#2}
    \newcounter{#1}
    \setcounter{#1}{\value{footnote}}%
}
\title{Query-Optimal and Sample-Optimal Quantum Algorithms for Estimating Fidelity to a Pure State\footnote{A preliminary version of this paper \cite{FW25c}, where only a query-optimal quantum algorithm was provided, was presented at the 33rd Annual European Symposium on Algorithms (ESA 2025) and also presented as part of the contributed talk \cite{FWZ25} at the 25th Asian Quantum Information Science Conference (AQIS 2025). In this paper, a sample-optimal quantum algorithm is also provided, resolving a question raised in the conference version \cite{FW25c} about the sample complexity of estimating the fidelity to a pure state.}}
\author{
    Wang Fang \footremember{1}{Wang Fang is with the School of Informatics, University of Edinburgh, EH8 9AB Edinburgh, United Kingdom (e-mail: \href{mailto:Wang.Fang@ed.ac.uk}{\nolinkurl{Wang.Fang@ed.ac.uk}}).}
    \and
    Qisheng Wang \footremember{2}{Qisheng Wang is with the School of Computer Science, Shanghai Jiao Tong University, Shanghai 200240, China (e-mail: \href{mailto:QishengWang1994@gmail.com}{\nolinkurl{QishengWang1994@gmail.com}}).}
}
\date{}
\begin{document}

\maketitle

\begin{abstract}
    We present two \textit{optimal} quantum algorithms that estimate the (square root) fidelity of a mixed state to a pure state to within additive error $\varepsilon$: 
    \begin{enumerate}
        \item Given query access to the state-preparation circuits of the input states, the query complexity is shown to be $\Theta(1/\varepsilon)$, achieving a quadratic speedup over the folklore $O(1/\varepsilon^2)$. 
        \item Given sample access to the input states, the sample complexity is shown to be $\Theta(1/\varepsilon^2)$, achieving a quadratic speedup over the folklore $O(1/\varepsilon^4)$. 
    \end{enumerate}
    Our results generalize the previous approaches to pure-state fidelity estimation, and, to the best of our knowledge, are the \textit{first} optimal approaches to fidelity estimation involving mixed states. 
    Our approach is technically simple, and can be extended to estimating the uncommon quantity $\sqrt{\operatorname{tr}(\rho\sigma^2)}$ that is of independent interest. 
\end{abstract}

\textbf{Keywords: quantum computing, fidelity estimation, quantum algorithms, quantum query complexity.}

\newpage

\tableofcontents
\newpage

\section{Introduction}

The fidelity between quantum states \cite{Uhl76,Joz94} is a closeness measure that is commonly used in quantum physics and quantum computing \cite{NC10,Wil13,Hay17,Wat18}. 
Formally, for two (mixed) quantum states $\rho$ and $\sigma$, their (square root) fidelity is defined by (see \cite[Equation (9.53)]{NC10})
\[
    \mathrm{F}\rbra{\rho, \sigma} = \tr\rbra*{\sqrt{\sqrt{\sigma}\rho\sqrt{\sigma}}}.
\]
The fidelity is generally bounded by $0 \leq \mathrm{F}\rbra{\rho, \sigma} \leq 1$. 
In particular, when $\mathrm{F}\rbra{\rho, \sigma} = 1$, the two states $\rho$ and $\sigma$ are identical; and when $\mathrm{F}\rbra{\rho, \sigma} = 0$, the two states are orthogonal. 

The estimation of fidelity turns out to be central in quantum property testing (cf. \cite{MdW16}), which is closely related to quantum hypothesis testing \cite{Kar05,ACM+07,CDL+25}. 
The earliest approach is now known as the SWAP test \cite{BCWdW01}, allowing us to estimate to within additive error $\varepsilon$ the squared fidelity $\mathrm{F}^2\rbra{\rho, \sigma} = \tr\rbra{\rho \sigma}$ when $\rho$ and $\sigma$ are pure states, using $O\rbra{1/\varepsilon^2}$ samples of $\rho$ and $\sigma$. 
Moreover, the squared fidelity $\mathrm{F}^2\rbra{\rho, \sigma}$ for pure states $\rho$ and $\sigma$ can be estimated using $O\rbra{1/\varepsilon}$ queries to their state-preparation circuits through the SWAP test \cite{BCWdW01} equipped with quantum amplitude estimation \cite{BHMT02}. 
The approach based on the SWAP test has been found to have various applications \cite{Nis25}. 
Fidelity estimation for pure states was also considered in some restricted models: a direct fidelity estimation \cite{FL11} was proposed when only Pauli measurements are allowed, and a distributed approach was developed in \cite{ALL22} and further extended in \cite{AS25,GHYZ24,ZWY+25}.
Recently, query-optimal and sample-optimal quantum algorithms for estimating the fidelity $\mathrm{F}\rbra{\rho, \sigma} = \sqrt{\tr\rbra{\rho\sigma}}$ for pure states $\rho$ and $\sigma$ have been found in \cite{Wan24} and \cite{WZ24}, respectively.

Although fidelity estimation is known to be $\mathsf{BQP}$-complete when one of the quantum states is pure \cite{RASW23}, fidelity estimation for mixed states is $\mathsf{QSZK}$-hard in general \cite{Wat02,Wat09}, which means that there is no polynomial-time quantum algorithm for fidelity estimation unless $\mathsf{BQP} = \mathsf{QSZK}$. 
Nevertheless, efficient quantum algorithms for fidelity estimation for low-rank states were recently discovered \cite{WZC+23} and later improved \cite{WGL+24,GP22,UNWT25,LT26,Wan26} with time complexity $\poly\rbra{r, 1/\varepsilon}$, where $r$ is the rank of the states. 
Quantum algorithms for fidelity estimation for well-conditioned states were recently proposed with query complexity $\poly\rbra{\kappa}\cdot \widetilde{\Theta}\rbra{1/\varepsilon}$ \cite{LWWZ25,UNWT25} and with sample complexity $\poly\rbra{\kappa}\cdot \widetilde{\Theta}\rbra{1/\varepsilon^2}$ \cite{UNWT25},\footnote{Throughout this paper, we use $\widetilde{O}\rbra{f} = O\rbra{f \polylog\rbra{f}}$, $\widetilde{\Omega}\rbra{f} = \Omega\rbra{f/\polylog\rbra{f}}$, and $\widetilde{\Theta}\rbra{f} = \widetilde{O}\rbra{f} \cap \widetilde{\Omega}\rbra{f}$.} achieving almost optimal dependence on the precision $\varepsilon$, where $\kappa$ is the condition number such that $\rho, \sigma \geq I/\kappa$. 

In this paper, we present two optimal quantum algorithms for estimating the fidelity of a mixed state $\rho$ to a pure state $\sigma = \ketbra{\psi}{\psi}$: one is for query access and the other is for sample access.
The input models of both quantum algorithms are currently standard models for quantum query complexity and quantum sample complexity, respectively.
The problem of estimating the fidelity of a mixed state to a pure state appears naturally in physics. 
Specifically, an ideal computation or target preparation is described by a pure state, while the state actually produced by a device is mixed because of noise or discarding an environment. 
Estimating the fidelity $\mathrm{F}\rbra{\rho,\ket{\psi}}$ then quantifies how close the actual device output $\rho$ is to the intended pure reference state $\ket{\psi}$ (cf.\ \cite{FL11,dSLCP11}). 
Concrete examples also include verifying quantum devices via fidelity \cite{PLM18,HKP20}.

We compare the complexity of known approaches to fidelity estimation in \cref{tab:cmp}. 
In the low-rank case where either $\rho$ or $\sigma$ has rank at most $r$, prior results showed query complexity upper bounds of $\widetilde{O}\rbra{r^{12.5}/\varepsilon^{13.5}}$ in \cite{WZC+23}, $\widetilde{O}\rbra{r^{6.5}/\varepsilon^{7.5}}$ in \cite{WGL+24}, $\widetilde{O}\rbra{r^{2.5}/\varepsilon^5}$ in \cite{GP22}, and $\widetilde{O}\rbra{r/\varepsilon^2}$ in \cite{UNWT25}, as well as sample complexity upper bounds of $\widetilde{O}\rbra{r^{5.5}/\varepsilon^{12}}$ in \cite{GP22} and $\widetilde{O}\rbra{r^2/\varepsilon^4}$ in \cite{UNWT25}. 
In particular, the fidelity estimation considered in this paper is actually the special case of the low-rank case with $r = 1$, whereas the current best low-rank approach \cite{UNWT25} only imply query and sample upper bounds of $\widetilde{O}\rbra{1/\varepsilon^2}$ and $\widetilde{O}\rbra{1/\varepsilon^4}$, respectively. 
In the well-conditioned case where $\rho, \sigma \geq I/\kappa$ for some parameter $\kappa > 0$, near-optimal query and sample complexity upper bounds $\widetilde{\Theta}\rbra{1/\varepsilon}$ and $\widetilde{\Theta}\rbra{1/\varepsilon^2}$ are obtained in \cite{LWWZ25,UNWT25} when $\kappa = \Theta\rbra{1}$. 
In comparison, those results for well-conditioned fidelity estimation do not apply to the fidelity estimation considered in this paper, as pure states are generally not well-conditioned.

\begin{table}[!htp]
    \centering
    \caption{Complexity of estimating the fidelity $\mathrm{F}\rbra{\rho, \sigma}$.}
    \label{tab:cmp}
    \begin{tabular}{cccc}
        \toprule
        References & Condition & Complexity & Notes \\
        \midrule
        \cite{Wan24} & \multirow{2}{*}{Both $\rho$ and $\sigma$ are pure} & $\Theta\rbra{1/\varepsilon}$ queries & Optimal \\
        \cite{WZ24} & & $\Theta\rbra{1/\varepsilon^2}$ samples & Optimal \\
        \midrule
        \cite{BCWdW01,BHMT02} & \multirow{4}{*}{One of $\rho$ and $\sigma$ is pure} & $O\rbra{1/\varepsilon^2}$ queries & / \\
        \textbf{This Work} & & $\bm{\Theta(1/\varepsilon)}$ \textbf{queries} & \textbf{Optimal} \\
        \cite{BCWdW01} & & $O\rbra{1/\varepsilon^4}$ samples & / \\
        \textbf{This Work} & & $\bm{\Theta(1/\varepsilon^2)}$ \textbf{samples} & \textbf{Optimal} \\
        \midrule
        \cite{WZC+23,WGL+24,GP22,UNWT25} & \multirow{2}{*}{Either $\rho$ or $\sigma$ is of rank $r$} & $\poly\rbra{r, 1/\varepsilon}$ queries & / \\
        \cite{GP22,UNWT25} & & $\poly\rbra{r, 1/\varepsilon}$ samples & / \\
        \midrule
        \cite{LWWZ25,UNWT25} & \multirow{3}{*}{$\rho, \sigma \geq I/\kappa$} & $\poly\rbra{\kappa} \cdot \widetilde{\Theta}\rbra{1/\varepsilon}$ queries & Almost optimal in $\varepsilon$ \\
        \cite{LWWZ25} & & $\poly\rbra{\kappa} \cdot \widetilde{O}\rbra{1/\varepsilon^3}$ samples & / \\ 
        \cite{UNWT25} & & $\poly\rbra{\kappa} \cdot \widetilde{\Theta}\rbra{1/\varepsilon^2}$ samples & Almost optimal in $\varepsilon$ \\
        \bottomrule
    \end{tabular}
\end{table}

\subsection{Query-optimal approach}

In the query-optimal approach, we assume query access to the state-preparation circuits of both the states $\rho$ and $\sigma$.
Specifically, the input model is known as the 
``\textit{purified quantum query access}'' model, where we assume query access to (the controlled version of) a quantum unitary oracle that prepares the purification of the input quantum state, and its inverse. 
This input model is commonly employed in quantum computational complexity \cite{Wat02} and quantum algorithms \cite{GL20}. 
Our query-optimal approach is stated in the following theorem. 

\begin{theorem} [Query-optimal approach to estimating fidelity to a pure state, \cref{thm:fidelity_to_pure} simplified] \label{thm:main}
    Given purified quantum query access to a mixed state $\rho$ and a pure state $\sigma = \ketbra{\psi}{\psi}$, the fidelity $\mathrm{F}\rbra{\rho, \sigma} = \sqrt{\bra{\psi} \rho \ket{\psi}}$ can be estimated to within additive error $\varepsilon$ with query complexity $O\rbra{1/\varepsilon}$.
\end{theorem}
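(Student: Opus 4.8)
The plan is to recast $\mathrm F\rbra{\rho,\sigma}=\sqrt{\bra{\psi}\rho\ket{\psi}}$ as the success amplitude of an explicit unitary and then apply quantum amplitude estimation. Purified query access to $\rho$ provides a unitary $U_\rho$ on a main register $\mathsf A$ together with a purifying register $\mathsf B$ such that $\tr_{\mathsf B}\rbra{U_\rho\ketbra{0}{0}U_\rho^\dagger}=\rho$; and since $\sigma=\ketbra{\psi}{\psi}$ is pure, its oracle may be taken to be a circuit $U_\psi$ on $\mathsf A$ with $U_\psi\ket{0}=\ket{\psi}$. Writing the spectral decomposition $\rho=\sum_k p_k\ketbra{\phi_k}{\phi_k}$, I can put the purification in Schmidt form, $U_\rho\ket{0}_{\mathsf{AB}}=\sum_k\sqrt{p_k}\,\ket{\phi_k}_{\mathsf A}\ket{k}_{\mathsf B}$ for some orthonormal family $\cbra{\ket{k}_{\mathsf B}}$.

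The key step is a one-line identity. Set $\mathcal A:=\rbra{U_\psi^\dagger\otimes I_{\mathsf B}}U_\rho$ and let $\Pi:=\ketbra{0}{0}_{\mathsf A}\otimes I_{\mathsf B}$ be the projector onto ``$\mathsf A$ holds $\ket{0}$''. Then
\[
    \Pi\,\mathcal A\ket{0}_{\mathsf{AB}}=\ket{0}_{\mathsf A}\otimes\sum_k\sqrt{p_k}\,\bra{0}U_\psi^\dagger\ket{\phi_k}\,\ket{k}_{\mathsf B}=\ket{0}_{\mathsf A}\otimes\sum_k\sqrt{p_k}\,\braket{\psi}{\phi_k}\,\ket{k}_{\mathsf B},
\]
so, by orthonormality of $\cbra{\ket{k}_{\mathsf B}}$,
\[
    \Abs{\Pi\,\mathcal A\ket{0}_{\mathsf{AB}}}^2=\sum_k p_k\abs{\braket{\psi}{\phi_k}}^2=\bra{\psi}\rho\ket{\psi},\qquad\text{hence}\qquad\Abs{\Pi\,\mathcal A\ket{0}_{\mathsf{AB}}}=\sqrt{\bra{\psi}\rho\ket{\psi}}=\mathrm F\rbra{\rho,\sigma}.
\]
Thus the square-root fidelity is exactly the amplitude of $\mathcal A$ on the ``good'' subspace defined by $\Pi$.

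Finally I would invoke quantum amplitude estimation \cite{BHMT02}, which, given a unitary $\mathcal A$ on a register initialized to $\ket{0}$ and a projector $\Pi$, estimates the amplitude $\Abs{\Pi\,\mathcal A\ket{0}}$ to additive error $\varepsilon$ with constant success probability using $O(1/\varepsilon)$ controlled applications of $\mathcal A$, $\mathcal A^\dagger$, the reflection $I-2\Pi$, and the reflection $I-2\ketbra{0}{0}$ --- internally, phase estimation on the associated Grover operator, whose nontrivial eigenphases encode the amplitude; a median over $O(\log(1/\delta))$ independent runs boosts the confidence to $1-\delta$. With $\mathcal A$ and $\Pi$ as above one has $\Abs{\Pi\mathcal A\ket{0}}=\mathrm F\rbra{\rho,\sigma}$, so this returns an estimate of $\mathrm F\rbra{\rho,\sigma}$ to additive error $\varepsilon$. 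Each application of $\mathcal A$ costs one query to (the controlled version of) $U_\rho$ and one to $U_\psi$; the reflections $I-2\Pi=\rbra{I_{\mathsf A}-2\ketbra{0}{0}_{\mathsf A}}\otimes I_{\mathsf B}$ and $I-2\ketbra{0}{0}_{\mathsf{AB}}$ need no queries; and clipping the estimate into $[0,1]$ is harmless since $\mathrm F\rbra{\rho,\sigma}\in[0,1]$. Altogether this estimates $\mathrm F\rbra{\rho,\sigma}$ to within $\varepsilon$ using $O(1/\varepsilon)$ queries, as claimed.

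I expect the only genuine difficulty to be conceptual rather than computational. The ``folklore'' route estimates the measurement probability $\bra{\psi}\rho\ket{\psi}=\mathrm F^2\rbra{\rho,\sigma}$ (for instance via a SWAP test boosted by amplitude estimation) and then takes a square root, which forces precision $\varepsilon^2$ and hence $\Theta(1/\varepsilon^2)$ cost when $\mathrm F\rbra{\rho,\sigma}$ is small; the point to notice is that by \emph{keeping} the purifying register $\mathsf B$ rather than tracing it out, $\mathrm F\rbra{\rho,\sigma}$ itself --- not its square --- becomes the norm of a pure-state amplitude, so amplitude estimation applies directly, and once this is seen the rest is the displayed identity plus a black-box call. (A minor wrinkle I would address in passing: if the oracle for the pure state instead outputs $\ket{\psi}$ alongside a fixed ancilla state, one first runs it forward on a scratch register to materialize that ancilla, swaps $\ket{\psi}$ into the main register, and only then uncomputes on the scratch registers --- which now succeeds because the ancilla register holds the correct state; this doubles the per-step cost of $\mathcal A$ but leaves the $O(1/\varepsilon)$ bound unchanged. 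The matching $\Omega(1/\varepsilon)$ lower bound, which yields the $\Theta(1/\varepsilon)$ advertised in the abstract, follows from the known optimality of amplitude estimation and would be argued separately.)
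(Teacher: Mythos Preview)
Your proposal is correct and follows essentially the same approach as the paper: encode $\mathrm{F}(\rho,\ket{\psi})$ as the amplitude on a ``good'' subspace of an explicit unitary built from the purification oracle for $\rho$ and the (inverse of the) preparation for $\ket{\psi}$, then apply (square-root) amplitude estimation. The only differences are presentational---the paper works with purified access to $\ket{\psi}$ from the outset via a $\mathsf{SWAP}$ between the ancilla registers (your parenthetical ``wrinkle'' recovers an equivalent construction), and it invokes a dedicated square-root amplitude estimation theorem from \cite{Wan24} rather than extracting the $\sqrt{p}$ guarantee from the phase-estimation internals of \cite{BHMT02} as you do.
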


It is worth noting that the approach in \cref{thm:main} further leads to optimal pure-state trace distance estimation in the purified quantum query access model (see \cref{cor:pure-td}).

\paragraph{Techniques.}
The idea is to construct a quantum circuit using the purified quantum query access oracles of $\rho$ and $\sigma$ so that it encodes the fidelity $\mathrm{F}\rbra{\rho, \sigma}$ into the amplitude (see \cref{prop:prob-to-pure}). 
Then, the fidelity $\mathrm{F}\rbra{\rho, \sigma}$ can be estimated by the square root amplitude estimation used in \cite{Wan24}. 

\paragraph{Comparison with previous approaches.}
Prior to the result of \cref{thm:main}, we are only aware of a folklore quantum algorithm based on the SWAP test (which was mentioned above and will be explained in detail in \cref{sec:swap}).
This folklore approach computes $\mathrm{F}^2\rbra{\rho, \sigma} = \bra{\psi} \rho \ket{\psi}$ to within additive error $\varepsilon$ with query complexity $O(1/\varepsilon)$, thereby resulting in a query complexity of $O(1/\varepsilon^2)$ for estimating $\mathrm{F}\rbra{\rho, \ket{\psi}}$ to within additive error $\varepsilon$. 
By comparison, \cref{thm:main} exhibits a quadratic improvement.

\paragraph{Generalization of the pure-state fidelity (and trace distance) estimation in \cite{Wan24}.}
The quantum query algorithm given in \cref{thm:main} generalizes the pure-state fidelity (and trace distance) estimation in \cite{Wan24} where unitary oracles are required to directly prepare the pure states. 
In comparison, our result in \cref{thm:main} allows the unitary oracles to have redundant output qubits (see \cref{sec:imp} for more details). 
The advantage is that \cref{thm:main} works for input models that are more general. 
Moreover, this advantage further leads to a sample-optimal approach for estimating the fidelity to a pure state (see \cref{sec:sample-optimal-approach}), whereas the pure-state fidelity estimation in \cite{Wan24} does not apply (see \cref{remark:sample}). 

\paragraph{Query-optimality.}

Our approach is actually optimal, as it can be used to estimate the fidelity between two pure states 
and the quantum query lower bound $\Omega\rbra{1/\varepsilon}$ is due to \cite{BBC+01,NW99} as noted in \cite{Wan24} (see \cref{thm:lower-bound-pure}\ref{item:pure-query}). 
In \cref{thm:lower-bound-rank}\ref{item:query}, we further show that the query-optimality holds even if $\rho$ is of an arbitrary rank. 

\subsection{Sample-optimal approach} \label{sec:sample-optimal-approach}

In the sample-optimal approach, we assume sample access to identical copies of both the states $\rho$ and $\sigma$. 
The input model is the standard ``\textit{sample access}'' model in quantum computing, and it is commonly employed in quantum state tomography \cite{HHJ+17,OW16,SSW25,PSW25,PSTW25} and property testing of quantum states \cite{CHW07,OW21,AISW20,BOW19,Hay25,CW25}. 
Our sample-optimal approach is stated in the following theorem.

\begin{theorem}[Sample-optimal approach to estimating fidelity to a pure state, \cref{thm:est-fid-sampl} simplified] \label{thm:main-s}
    Given sample access to a mixed state $\rho$ and a pure state $\sigma = \ketbra{\psi}{\psi}$, the fidelity $\mathrm{F}\rbra{\rho, \sigma} = \sqrt{\bra{\psi} \rho \ket{\psi}}$ can be estimated to within additive error $\varepsilon$ with sample complexity $O\rbra{1/\varepsilon^2}$. 
\end{theorem}

\paragraph{Techniques.}
This is achieved by the strengthened version \cite{TWZ25} of quantum sample-to-query lifting \cite{WZ25a,WZ25b} (see also \cite{CWZ25}), which can convert a quantum algorithm with query complexity $Q$ in the purified quantum query access model to another quantum algorithm with sample complexity $O\rbra{Q^2}$. 
Given this, the query complexity $O\rbra{1/\varepsilon}$ in \cref{thm:main} immediately gives the sample complexity $O\rbra{1/\varepsilon^2}$ in \cref{thm:main-s}. 
It should be noted that the need of purified quantum query access required in \cref{thm:main} is necessary to establish the result in \cref{thm:main-s}  (see \cref{remark:sample}). 

\paragraph{Comparison with previous approaches.}
Prior to the result of \cref{thm:main-s}, we are only aware of a folklore quantum algorithm based on the SWAP test (see \cref{sec:swap}).
This folklore approach can compute $\mathrm{F}^2\rbra{\rho, \sigma} = \bra{\psi} \rho \ket{\psi}$ to within additive error $\varepsilon$ with sample complexity $O(1/\varepsilon^2)$, thereby resulting in a sample complexity of $O(1/\varepsilon^4)$ for estimating $\mathrm{F}\rbra{\rho, \ket{\psi}}$ to within additive error $\varepsilon$. 
By comparison, \cref{thm:main-s} exhibits a quadratic improvement.

\paragraph{Comparison with the pure-state fidelity estimation in \cite{WZ24}.}

The sample complexity $O\rbra{1/\varepsilon^2}$ for estimating the fidelity $\mathrm{F}\rbra{\rho, \ket{\psi}}$ in \cref{thm:main-s} reproduces the result in \cite{WZ24} that the sample complexity for estimating the fidelity $\mathrm{F}\rbra{\ket{\varphi}, \ket{\psi}}$ between two pure states $\ket{\varphi}$ and $\ket{\psi}$ is $O\rbra{1/\varepsilon^2}$. 
    In comparison, 
    \begin{enumerate}
        \item \cref{thm:main-s} applies to a more general case where one quantum state is mixed, whereas the result in \cite{WZ24} only applies to the case where both quantum states are pure,
        \item The approach in \cite{WZ24} has time complexity (i.e., the number of elementary quantum gates) $O\rbra{k/\varepsilon^2}$ when the quantum states are $k$-qubit, whereas the time complexity $\widetilde{O}\rbra{\poly\rbra{k}/\varepsilon^8}$ of the approach in \cref{thm:main-s} is polynomially worse (see \cref{thm:est-fid-sampl} for the time complexity of \cref{thm:main-s}). 
    \end{enumerate}

\paragraph{Sample-optimality.}

Our approach is actually optimal, as it can be used to estimate the fidelity between two pure states 
and the quantum sample lower bound $\Omega\rbra{1/\varepsilon^2}$ is due to \cite{ALL22} (see \cref{thm:lower-bound-pure}\ref{item:pure-sample}). 
In \cref{thm:lower-bound-rank}\ref{item:sample}, we further show that the sample-optimality holds even if $\rho$ is of an arbitrary rank. 

\subsection{Extensions}

As a bonus, our approaches in \cref{thm:main,thm:main-s} can further estimate the quantity $\sqrt{\tr\rbra{\rho\sigma^2}}$ with optimal query complexity and sample complexity (see \cref{thm:trace_rho_sigma2,thm:trace_rho_sigma2-sample}), which is of independent interest as this quantity is not common in the literature. 
Nevertheless, we hope this will provide new insights into the development of quantum computing.

\subsection{Organization of this paper}

We give a formal definition of the purified quantum query access model and the problem statement of fidelity estimation in \cref{sec:def}.
Then, we review previous approaches in \cref{sec:warmup} with their subroutines. 
In \cref{sec:algo}, we present our query-optimal approach and its generalization, with an implication discussed in \cref{sec:imp}. 
In \cref{sec:sample}, we present our sample-optimal approach and its generalization. 
Lower bounds on both the query and sample complexities for fidelity estimation are discussed in \cref{sec:lb}. 
Finally, a brief discussion is drawn in \cref{sec:discussion}. 

\section{Problem Settings} \label{sec:def}

In this paper, we assume purified quantum query access to the input quantum states. 
This input model is widely used in the literature, e.g., \cite{GL20,SH21,GHS21,GP22,WZL24,WZ24b,LW25}. 

\begin{definition} [Purified quantum query access] \label{def:query}
    For a mixed quantum state $\rho$, purified quantum query access to $\rho$ means query access to a unitary oracle $U$ that prepares a purification of $\rho$.
    Specifically, suppose $U_{\mathsf{A}\mathsf{B}}$ acts on two subsystems $\mathsf{A}$ and $\mathsf{B}$, then $\rho_{\mathsf{A}} = \tr_{\mathsf{B}}\rbra{\ketbra{\rho}{\rho}_{\mathsf{A}\mathsf{B}}}$, where $\ket{\rho}_{\mathsf{A}\mathsf{B}} = U_{\mathsf{A}\mathsf{B}} \ket{0}_{\mathsf{A}\mathsf{B}}$.
    Moreover, we are allowed to use queries to (controlled-)$U$ and its inverse. 
\end{definition}

We consider the problem of query-access fidelity estimation, formally stated as follows. 

\begin{problem} [Query-access fidelity estimation] \label{prob:fidelity}
    Given purified quantum query access to two mixed quantum states $\rho$ and $\sigma$, the task is to compute $\mathrm{F}\rbra{\rho, \sigma}$ to within additive error $\varepsilon$. 
\end{problem}

We also consider the problem of sample-access fidelity estimation, formally stated as follows.

\begin{problem}[Sample-access fidelity estimation] \label{prob:fidelity-s}
    Given sample access to (i.e., sufficiently many identical copies of) two mixed quantum states $\rho$ and $\sigma$, the task is to compute $\mathrm{F}\rbra{\rho, \sigma}$ to within additive error $\varepsilon$. 
\end{problem}

In particular, this paper focuses on the case of Problems \ref{prob:fidelity} and \ref{prob:fidelity-s} where $\sigma = \ketbra{\psi}{\psi}$ is pure. 

\section{Warm-Ups} \label{sec:warmup}

As a warm-up, we review some previous approaches to fidelity estimation involving pure states, together with their useful subroutines, that are comparable to and inspire our results. 

\subsection{SWAP test} \label{sec:swap}

It is well known that the SWAP test \cite{BCWdW01} can be used to estimate the fidelity $\mathrm{F}\rbra{\rho, \sigma}$ when one of $\rho$ and $\sigma$ is pure. 
In particular, if $\sigma = \ketbra{\psi}{\psi}$ is pure, the SWAP test on input $\rho$ and $\sigma$ (see \cref{fig:swap}) outputs a bit $x \in \cbra{0, 1}$ such that (adapted from \cite[Proposition 9]{KMY09})
\begin{equation} \label{eq:swap-prob}
    \Pr\sbra{x = 0} = \frac{1 + \mathrm{F}^2\rbra{\rho, \ket{\psi}}}{2}.
\end{equation}
With this, we can estimate $\mathrm{F}\rbra{\rho, \ket{\psi}}$.
Specifically, suppose that $\tilde p$ is an estimate of $\Pr\sbra{x = 0}$ such that $\abs{\tilde p - \Pr\sbra{x = 0}} \leq \delta$. 
Then, it can be shown that $\sqrt{2\tilde p - 1}$ is an estimate of $\mathrm{F}\rbra{\rho, \ket{\psi}}$ with $\abs{\sqrt{2\tilde p - 1} - \mathrm{F}\rbra{\rho, \ket{\psi}}} \leq \Theta\rbra{\sqrt{\delta}}$. 
An estimate of $\mathrm{F}\rbra{\rho, \ket{\psi}}$ to within additive error $\varepsilon$ can be obtained by setting $\delta = \Theta\rbra{\varepsilon^2}$. 

\begin{figure} [!htp]
\centering
\begin{quantikz} [row sep = {20pt, between origins}]
    \lstick{$\ket{0}$} & \gate{H} & \ctrl{2} & \gate{H} & \meter{} & \setwiretype{c} & \rstick{$x \in \cbra{0, 1}$} \\
    \lstick{$\rho$~} & \qw & \swap{1} & \qw & \qw \\
    \lstick{$\ket{\psi}$} & \qw & \targX{} & \qw & \qw \\
\end{quantikz}
\caption{The SWAP test for estimating $\mathrm{F}\rbra{\rho, \ket{\psi}}$.}
\label{fig:swap}
\end{figure}

To estimate $\mathrm{F}\rbra{\rho, \ket{\psi}}$ when given purified quantum query access to $\rho$ and $\ket{\psi}$, we need the quantum subroutine for amplitude estimation \cite{BHMT02}. 

\begin{theorem} [Quantum amplitude estimation, \cite{BHMT02}] \label{thm:ampl-esti}
    Suppose that $U$ is a unitary operator such that $U\ket{0}_\mathsf{A}\ket{0}_{\mathsf{B}} = \sqrt{p} \ket{0}_{\mathsf{A}} \ket{\varphi_0}_{\mathsf{B}} + \sqrt{1-p} \ket{1}_{\mathsf{A}} \ket{\varphi_1}_{\mathsf{B}}$, where $p \in \sbra{0, 1}$ and $\ket{\varphi_0}, \ket{\varphi_1}$ are normalized pure states. 
    Then, we can estimate $p$ to within additive error $\delta$ using $O\rbra{1/\delta}$ queries to (controlled-)$U$ and its inverse. 
\end{theorem}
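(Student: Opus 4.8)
The plan is to reduce \cref{thm:ampl-esti} to \emph{quantum phase estimation} applied to a Grover-type iterate — essentially the argument of Brassard et al. First I would fix coordinates. Write $\ket{\Psi} := U\ket{0}_{\mathsf{A}}\ket{0}_{\mathsf{B}} = \sin\theta\,\ket{\psi_g} + \cos\theta\,\ket{\psi_b}$, where $\ket{\psi_g} := \ket{0}_{\mathsf{A}}\ket{\varphi_0}_{\mathsf{B}}$ (the ``good'' part), $\ket{\psi_b} := \ket{1}_{\mathsf{A}}\ket{\varphi_1}_{\mathsf{B}}$, and $\theta \in \sbra{0, \pi/2}$, so that $p = \sin^2\theta$. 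Everything then takes place inside the two-dimensional space $\mathcal{V} := \spanspace\cbra{\ket{\psi_g}, \ket{\psi_b}}$.

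Next I would introduce the operator $Q := \rbra{2\ketbra{\Psi}{\Psi} - I}\rbra{I - 2\Pi_g}$ with $\Pi_g := \ketbra{0}{0}_{\mathsf{A}} \otimes I_{\mathsf{B}}$, and note that $2\ketbra{\Psi}{\Psi} - I = -\,U\rbra{I - 2\ketbra{0}{0}_{\mathsf{A}\mathsf{B}}}U^{-1}$, so that one application of (controlled-)$Q$ uses $O(1)$ queries to (controlled-)$U$ and its inverse. The key step is the two-reflections lemma (a special case of Jordan's lemma): both factors of $Q$ preserve $\mathcal{V}$ and act on it as reflections about lines, so $Q$ restricted to $\mathcal{V}$ is a rotation by $2\theta$; equivalently $Q|_{\mathcal{V}}$ has eigenvectors $\ket{\psi_\pm} := \tfrac{1}{\sqrt{2}}\rbra{\ket{\psi_g} \pm i\ket{\psi_b}}$ with eigenvalues $e^{\pm 2i\theta}$, and — this is the crucial point — $\ket{\Psi} = \tfrac{-i}{\sqrt{2}}\rbra{e^{i\theta}\ket{\psi_+} - e^{-i\theta}\ket{\psi_-}}$ has overlap of modulus exactly $\tfrac{1}{\sqrt{2}}$ with each eigenvector, \emph{independently of $p$}.

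Then I would run quantum phase estimation on $Q$ with input $\ket{\Psi}$ using $m := \ceil{\log_2(C/\delta)}$ phase qubits for a suitable constant $C$; this amounts to $O\rbra{2^m} = O\rbra{1/\delta}$ controlled applications of $Q$, hence $O\rbra{1/\delta}$ queries to (controlled-)$U$ and $U^{-1}$. Because $\ket{\Psi}$ splits evenly between the eigenvectors $\ket{\psi_\pm}$, with constant probability the phase register yields an integer $\tilde{y}$ such that $\tilde{\theta} := \pi\tilde{y}/2^m$ is within $O(1/2^m)$ of $\theta$ or of $\pi - \theta$; since $\sin^2$ is $1$-Lipschitz (its derivative being $\sin 2x$) and $\sin^2(\pi - \theta) = \sin^2\theta$, the output $\tilde{p} := \sin^2\tilde{\theta}$ then satisfies $\abs{\tilde{p} - p} = O(1/2^m) = O(\delta)$. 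Choosing $C$ appropriately makes this error at most $\delta$, and a constant number of repetitions with the median trick boosts the success probability to, say, $2/3$.

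The main obstacle is the geometric heart of the second step: verifying that $Q|_{\mathcal{V}}$ genuinely is the rotation $e^{\pm 2i\theta}$ and that $\ket{\Psi}$ has $\Theta(1)$ overlap with \emph{both} of its eigenvectors — this $p$-independence is exactly what pins the query cost at $O(1/\delta)$ rather than letting it degrade for small $p$ — together with the degenerate cases $\theta \in \cbra{0, \pi/2}$ (that is, $p \in \cbra{0, 1}$), where $\ket{\Psi}$ is itself an eigenvector of $Q$ with eigenvalue $\pm 1$ and the estimate is in fact exact, so these must be treated separately. The remaining ingredients — the $O(1/2^m)$ accuracy and constant success probability of phase estimation, the linearity argument that lets it work on a superposition of eigenstates, and the trigonometric error propagation — are standard, and I would only sketch them.
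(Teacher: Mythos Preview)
The paper does not prove \cref{thm:ampl-esti} at all: it is stated as a black-box result from \cite{BHMT02} and used as a subroutine, with no argument given. Your proposal is a correct sketch of the standard Brassard--H{\o}yer--Mosca--Tapp proof via phase estimation on the Grover iterate, which is precisely what \cite{BHMT02} does, so there is nothing to compare against in this paper.
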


By \cref{thm:ampl-esti}, we can therefore obtain an estimate of $\mathrm{F}\rbra{\rho, \ket{\psi}}$ to within additive error $\varepsilon$ using $O\rbra{1/\varepsilon^2}$ queries to the state-preparation circuits of $\rho$ and $\ket{\psi}$, based on the SWAP test in \cref{fig:swap}. 

\subsection{Pure-state fidelity estimation} \label{sec:pure-fidelity}

Recently, a new approach was proposed in \cite{Wan24} for estimating the fidelity $\mathrm{F}\rbra{\ket{\varphi}, \ket{\psi}}$ between two pure states $\ket{\varphi}$ and $\ket{\psi}$, improving the folklore approach in \cref{sec:swap}. 
However, the approach in \cite{Wan24} assumes a slightly more restricted input model, where $U_{{\varphi}}$ and $U_{{\psi}}$ are two unitary quantum circuits that respectively prepare $\ket{\varphi}$ and $\ket{\psi}$, i.e., $U_{{\varphi}} \ket{0} = \ket{\varphi}$ and $U_{{\psi}} \ket{0} = \ket{\psi}$.

The key idea of \cite{Wan24} is to \textit{encode} the value of $\mathrm{F}\rbra{\ket{\varphi}, \ket{\psi}}$ in the amplitudes of a quantum state, rather than the $\sqrt{\rbra{1 + \mathrm{F}^2\rbra{\ket{\varphi}, \ket{\psi}}}/{2}}$ (in \cref{eq:swap-prob}). 
Specifically, $\mathrm{F}\rbra{\ket{\varphi}, \ket{\psi}}$ can be encoded by the unitary operator $W = U_{\varphi}^\dag U_{\psi}$ such that
\begin{equation} \label{eq:def-W}
    W \ket{0} = \mathrm{F}\rbra{\ket{\varphi}, \ket{\psi}} \ket{0} + \ket{\perp},
\end{equation}
where $\ket{\perp}$ is an (unnormalized) pure state that is orthogonal to $\ket{0}$. 
The circuit $W$ is depicted in \cref{fig:qc-w} and it outputs a bit $x \in \cbra{0, 1}$ such that 
\begin{equation} \label{eq:prob-pure-state-fidelity-estimation}
    \Pr\sbra{x = 0} = \mathrm{F}^2\rbra{\ket{\varphi}, \ket{\psi}}.
\end{equation}

\begin{figure} [!htp]
\centering
\begin{quantikz} [row sep = {20pt, between origins}]
    \lstick{$\ket{0}$} & \gate[][20][20]{U_{\psi}} & \gate[][20][20]{U_{\varphi}^\dag} & \meter{} & \setwiretype{c} & \rstick{$x \in \cbra{0, 1}$}
\end{quantikz}
\caption{The quantum circuit for estimating $\mathrm{F}\rbra{\ket{\varphi}, \ket{\psi}}$.}
\label{fig:qc-w}
\end{figure}

Finally, the value of $\mathrm{F}\rbra{\ket{\varphi}, \ket{\psi}}$ can be estimated by the square root amplitude estimation provided in \cite{Wan24}. 

\begin{theorem} [Quantum square root amplitude estimation, {\cite[Theorem III.4]{Wan24}}] \label{thm:sqrt}
    Suppose that $U$ is a unitary operator such that $U\ket{0}_\mathsf{A}\ket{0}_{\mathsf{B}} = \sqrt{p} \ket{0}_{\mathsf{A}} \ket{\varphi_0}_{\mathsf{B}} + \sqrt{1-p} \ket{1}_{\mathsf{A}} \ket{\varphi_1}_{\mathsf{B}}$, where $p \in \sbra{0, 1}$ and $\ket{\varphi_0}, \ket{\varphi_1}$ are normalized pure states. 
    Then, we can estimate $\sqrt{p}$ to within additive error $\delta$ using $O\rbra{1/\delta}$ queries to (controlled-)$U$ and its inverse and $O\rbra{\log\rbra{1/\delta}\log\log\rbra{1/\delta}}$ additional elementary quantum gates.\footnote{In \cite[Theorem III.4]{Wan24}, the quantum square root amplitude estimation was essentially implemented through quantum phase estimation \cite{Kit95,Sho97}.
    In the textbook approach \cite{NC10} of quantum phase estimation, the number of additional elementary quantum gates is $O\rbra{\log^2\rbra{1/\delta}}$ due to the use of the textbook quantum Fourier transform. This can be further improved to $O\rbra{\log\rbra{1/\delta}\log\log\rbra{1/\delta}}$ using the improved quantum Fourier transform \cite{HH00}.} 

    For convenience, throughout this paper, we use $\mathsf{SqrtAmpEst}\rbra{U, \varepsilon}$ to denote (the returned value of) the square root amplitude estimation process. 
\end{theorem}

Compared to \cref{thm:ampl-esti}, \cref{thm:sqrt} gives an estimate of $\sqrt{p}$ (instead of $p$), thereby allowing us to skip taking square roots. 
By \cref{thm:sqrt}, we can therefore obtain an estimate of $\mathrm{F}\rbra{\ket{\varphi}, \ket{\psi}}$ to within additive error $\varepsilon$ using $O\rbra{1/\varepsilon}$ queries to the state-preparation circuits of $\ket{\varphi}$ and $\ket{\psi}$, based on the construction of $W$ in \cref{eq:def-W}.

\section{Query-Optimal Approach} \label{sec:algo}

For the task of estimating the fidelity $\mathrm{F}\rbra{\rho, \ket{\psi}}$ of a mixed state to a pure state, the approach in \cref{sec:swap} only gives a query complexity of $O\rbra{1/\varepsilon^2}$, while the approach in \cref{sec:pure-fidelity}, however, turns out not to be applicable (as a more restricted input model is assumed). 
In this section, we present a simple quantum algorithm that estimates $\mathrm{F}\rbra{\rho, \ket{\psi}}$ to within additive error $\varepsilon$ using $O\rbra{1/\varepsilon}$ queries to the state-preparation circuits of $\rho$ and $\ket{\psi}$. 

Suppose that the purified quantum query access to $\rho$ and $\ket{\psi}$ is given by two quantum circuits $U$ and $V$, respectively:
\begin{align}
    U_{\mathsf{AB}} \ket{0}_{\mathsf{A}} \ket{0}_{\mathsf{B}} & = \ket{\rho}_{\mathsf{AB}}, \label{eq:purification_rho} \\
    V_{\mathsf{A'B'}} \ket{0}_{\mathsf{A'}} \ket{0}_{\mathsf{B'}} & = \ket{\psi}_{\mathsf{A'}} \ket{\psi'}_{\mathsf{B'}}, \label{eq:purification_psi}
\end{align}
where $\rho_{\mathsf{A}} = \tr_{\mathsf{B}}\rbra{\ketbra{\rho}{\rho}_{\mathsf{AB}}}$ and $\ket{\psi'}_{\mathsf{B'}}$ is any pure state. 
Here, $\mathsf{A}$, $\mathsf{B}$, $\mathsf{A'}$, and $\mathsf{B'}$ are subscripts of subsystems for clarity, where the subsystems $\mathsf{A}$ and $\mathsf{A'}$ have the same dimension. 
Without loss of generality, we can also assume that the subsystems $\mathsf{B}$ and $\mathsf{B'}$ have the same dimension.\footnote{If the subsystem $\mathsf{B}$ has a larger dimension than $\mathsf{B'}$, then $V_{\mathsf{A'B'}} \otimes I_{\mathsf{C}}$ (for certain subsystem $\mathsf{C}$) can be a purified quantum query oracle for $\ket{\psi}$ with the ancilla system $\mathsf{B'C}$ having the same dimension as $\mathsf{B}$. A similar construction also applies to the case that the subsystem $\mathsf{B'}$ has a larger dimension than $\mathsf{B}$.} 

\subsection{Estimating fidelity to a pure state}

Our idea is to encode $\mathrm{F}\rbra{\rho, \ket{\psi}}$ in the amplitudes of an efficiently preparable quantum state, and then take use of the square root amplitude estimation in \cref{thm:sqrt}.
To this end, we present a quantum circuit
\begin{equation}\label{eq:fidelity_w}
    W = \rbra{V_{\mathsf{AB}}^\dag\otimes I_{\mathsf{A'B'}}}\cdot \mathsf{SWAP}_{\mathsf{BB'}} \cdot \rbra{U_{\mathsf{AB}}\otimes V_{\mathsf{A'B'}}}.
\end{equation}
This circuit is depicted in \cref{fig:q-von-ex} and it outputs two bits $x_{\mathsf{A}}, x_{\mathsf{B}} \in \cbra{0, 1}$ such that 
\begin{equation} \label{eq:prob-to-pure}
    \Pr\sbra{x_{\mathsf{A}} = x_{\mathsf{B}} = 0} = \mathrm{F}^2\rbra{\rho, \ket{\psi}}. 
\end{equation}
\cref{eq:prob-to-pure} is comparable to \cref{eq:swap-prob} for the SWAP test and \cref{eq:prob-pure-state-fidelity-estimation} for the pure-state fidelity estimation, and it can be verified by the following proposition. 
\begin{figure} [htp]
\centering
\begin{quantikz}
    \lstick{$\ket{0}_{\mathsf{A}}$} & \gate[2]{U} \gategroup[wires=4,steps=3,style={dashed}]{$W$} & \permute{1,4,3,2} & \gate[2]{V^\dag} & \meter{} & \setwiretype{c} \rstick{$x_\mathsf{A}$} \\
    \lstick{$\ket{0}_{\mathsf{B}}$} & & & & \meter{} & \setwiretype{c} \rstick{$x_\mathsf{B}$} \\
    \lstick{$\ket{0}_{\mathsf{A'}}$} & \gate[2]{V} & & & \\
    \lstick{$\ket{0}_{\mathsf{B'}}$} & & & &
\end{quantikz}
\caption{The encoding unitary operator $W$ for $\mathrm{F}\rbra{\rho, \ket{\psi}}$.}
\label{fig:q-von-ex}
\end{figure}
\begin{proposition} \label{prop:prob-to-pure}
    $\Abs*{\rbra*{\bra{0}_{\mathsf{A}} \bra{0}_{\mathsf{B}} \otimes I_{\mathsf{A'B'}}}W\ket{0}_\mathsf{A}\ket{0}_\mathsf{B}\ket{0}_\mathsf{A'}\ket{0}_\mathsf{B'}}^2 = \mathrm{F}^2\rbra{\rho, \ket{\psi}}$, where $W$ is defined by \cref{eq:fidelity_w}, $U_{\mathsf{AB}}$ is defined by \cref{eq:purification_rho}, and $V_{\mathsf{A}'\mathsf{B}'}$ is defined by \cref{eq:purification_psi}. 
\end{proposition}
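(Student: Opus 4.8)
The plan is to propagate the all-zero input through the three layers of $W$ and read off the required amplitude vector. The starting observation is purely definitional: $\mathrm{F}^2\rbra{\rho,\ket{\psi}} = \bra{\psi}\rho\ket{\psi}$, and since $\rho = \tr_{\mathsf{B}}\rbra{\ketbra{\rho}{\rho}_{\mathsf{AB}}}$ by \cref{eq:purification_rho}, we have $\bra{\psi}\rho\ket{\psi} = \Abs{\rbra{\bra{\psi}_{\mathsf{A}}\otimes I_{\mathsf{B}}}\ket{\rho}_{\mathsf{AB}}}^2$. So it suffices to show that the (unnormalized) state $\rbra{\bra{0}_{\mathsf{A}}\bra{0}_{\mathsf{B}}\otimes I_{\mathsf{A'B'}}}W\ket{0}_{\mathsf{A}}\ket{0}_{\mathsf{B}}\ket{0}_{\mathsf{A'}}\ket{0}_{\mathsf{B'}}$ equals $\rbra{\bra{\psi}_{\mathsf{A}}\otimes I_{\mathsf{B'}}}\ket{\rho}_{\mathsf{AB'}}\otimes\ket{\psi}_{\mathsf{A'}}$, i.e.\ a copy of $\ket{\rho}$ with its purifying register relabeled from $\mathsf{B}$ to $\mathsf{B'}$, tensored with the unit vector $\ket{\psi}_{\mathsf{A'}}$; its squared norm is then exactly $\mathrm{F}^2\rbra{\rho,\ket{\psi}}$.

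For the computation itself, after applying $U_{\mathsf{AB}}\otimes V_{\mathsf{A'B'}}$ the state is $\ket{\rho}_{\mathsf{AB}}\otimes\ket{\psi}_{\mathsf{A'}}\otimes\ket{\psi'}_{\mathsf{B'}}$ by \cref{eq:purification_rho,eq:purification_psi}. To pass the middle layer cleanly I would expand $\ket{\rho}_{\mathsf{AB}} = \sum_j \ket{\rho_j}_{\mathsf{A}}\ket{j}_{\mathsf{B}}$ in a fixed orthonormal basis $\cbra{\ket{j}}$ of $\mathsf{B}$; since $\mathsf{SWAP}_{\mathsf{BB'}}$ sends $\ket{j}_{\mathsf{B}}\ket{\psi'}_{\mathsf{B'}}\mapsto\ket{\psi'}_{\mathsf{B}}\ket{j}_{\mathsf{B'}}$ (by linearity, as $\ket{\psi'}$ is a superposition of basis vectors), the state becomes $\ket{\psi'}_{\mathsf{B}}\otimes\ket{\psi}_{\mathsf{A'}}\otimes\rbra{\sum_j\ket{\rho_j}_{\mathsf{A}}\ket{j}_{\mathsf{B'}}} = \ket{\psi'}_{\mathsf{B}}\otimes\ket{\psi}_{\mathsf{A'}}\otimes\ket{\rho}_{\mathsf{AB'}}$. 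Applying the final layer $V^\dag_{\mathsf{AB}}\otimes I_{\mathsf{A'B'}}$ and then projecting the $\mathsf{AB}$ registers onto $\ket{0}$ amounts to contracting with $\bra{0}_{\mathsf{A}}\bra{0}_{\mathsf{B}}V^\dag_{\mathsf{AB}} = \rbra{V_{\mathsf{AB}}\ket{0}_{\mathsf{A}}\ket{0}_{\mathsf{B}}}^\dag = \bra{\psi}_{\mathsf{A}}\bra{\psi'}_{\mathsf{B}}$, using \cref{eq:purification_psi}.

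Carrying out this contraction: the $\mathsf{B}$ register yields the scalar $\braket{\psi'}{\psi'} = 1$ (this is the only place the unit norm of the junk purification $\ket{\psi'}$ enters), the $\mathsf{A}$ register turns $\ket{\rho}_{\mathsf{AB'}}$ into $\rbra{\bra{\psi}_{\mathsf{A}}\otimes I_{\mathsf{B'}}}\ket{\rho}_{\mathsf{AB'}}$, and $\ket{\psi}_{\mathsf{A'}}$ is a spectator; taking the squared norm and using $\Abs{\ket{\psi}}^2 = 1$ recovers $\Abs{\rbra{\bra{\psi}_{\mathsf{A}}\otimes I}\ket{\rho}}^2 = \bra{\psi}\rho\ket{\psi} = \mathrm{F}^2\rbra{\rho,\ket{\psi}}$, as claimed. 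The only step that requires genuine care is the bookkeeping of subsystems across $\mathsf{SWAP}_{\mathsf{BB'}}$ — in particular, the assumption (justified in the footnote after \cref{eq:purification_psi}) that $\mathsf{B}$ and $\mathsf{B'}$ have equal dimension is what makes $\mathsf{SWAP}_{\mathsf{BB'}}$ and the relabeling $\mathsf{B}\leftrightarrow\mathsf{B'}$ well defined; the rest is a one-line tensor-network manipulation. If one prefers to avoid the abstract relabeling, the same computation can be run in explicit coordinates, writing $\ket{\rho}_{\mathsf{AB}} = \sum_{ij}M_{ij}\ket{i}_{\mathsf{A}}\ket{j}_{\mathsf{B}}$ with $MM^\dag = \rho$ and tracking amplitudes directly; this yields the identity $\sum_j\abs{\rbra{\bra{\psi}M}_j}^2 = \bra{\psi}MM^\dag\ket{\psi}$ at the cost of slightly heavier notation.
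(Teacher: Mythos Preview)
Your proof is correct and follows essentially the same direct computation as the paper: propagate the input through the three layers of $W$, identify $\bra{0}_{\mathsf{A}}\bra{0}_{\mathsf{B}}V_{\mathsf{AB}}^\dag$ with $\bra{\psi}_{\mathsf{A}}\bra{\psi'}_{\mathsf{B}}$, and reduce to $\Abs{\rbra{\bra{\psi}_{\mathsf{A}}\otimes I}\ket{\rho}}^2 = \bra{\psi}\rho\ket{\psi}$. The only cosmetic difference is that you push $\mathsf{SWAP}_{\mathsf{BB'}}$ onto the ket via a basis expansion (yielding the relabeled vector $\ket{\rho}_{\mathsf{AB'}}$), whereas the paper pushes it onto the bra and then passes to trace form; these are two presentations of the same one-line tensor contraction.
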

\begin{proof}
    This can be shown by direct calculations. 
    \begin{align*}
    & \Abs*{\rbra*{\bra{0}_\mathsf{A}\bra{0}_{\mathsf{B}}\otimes I_{\mathsf{A'B'}}}W\ket{0}_\mathsf{A}\ket{0}_\mathsf{B}\ket{0}_\mathsf{A'}\ket{0}_\mathsf{B'}}^2 \\
    ={}& \Abs*{\rbra*{\bra{0}_\mathsf{A}\bra{0}_{\mathsf{B}}\otimes I_{\mathsf{A'B'}}} \rbra{V_{\mathsf{AB}}^\dag\otimes I_{\mathsf{A'B'}}}\cdot \mathsf{SWAP}_{\mathsf{BB'}} \cdot \rbra{U_{\mathsf{AB}}\otimes V_{\mathsf{A'B'}}} \ket{0}_\mathsf{A}\ket{0}_\mathsf{B}\ket{0}_\mathsf{A'}\ket{0}_\mathsf{B'}}^2 \\
    ={}& \Abs*{\rbra*{\bra{\psi}_\mathsf{A}\bra{\psi'}_{\mathsf{B}}\otimes I_{\mathsf{A'B'}}} \cdot \mathsf{SWAP}_{\mathsf{BB'}} \cdot \ket{\rho}_\mathsf{AB}\ket{\psi}_\mathsf{A'}\ket{\psi'}_\mathsf{B'}}^2 \\
    ={}& \Abs*{\rbra*{\bra{\psi}_\mathsf{A}\otimes I_{\mathsf{BA'}}\otimes\bra{\psi'}_{\mathsf{B'}}} \cdot \ket{\rho}_\mathsf{AB}\ket{\psi}_\mathsf{A'}\ket{\psi'}_\mathsf{B'}}^2 \\
    ={}& \Abs*{\rbra*{\bra{\psi}_\mathsf{A}\otimes I_{\mathsf{BA'}}} \cdot \ket{\rho}_\mathsf{AB}\ket{\psi}_\mathsf{A'}}^2 \\
    ={}& \bra{\rho}_{\mathsf{AB}}\bra{\psi}_{\mathsf{A'}} \cdot \rbra*{\ketbra{\psi}{\psi}_{\mathsf{A}} \otimes I_\mathsf{BA'}} \cdot \ket{\rho}_\mathsf{AB}\ket{\psi}_\mathsf{A'} \\
    ={}& \bra{\rho}_{\mathsf{AB}} \cdot \rbra*{\ketbra{\psi}{\psi}_{\mathsf{A}} \otimes I_\mathsf{B}} \cdot \ket{\rho}_\mathsf{AB} \\
    ={}& \tr\rbra[\big]{\rbra*{\ketbra{\psi}{\psi}_{\mathsf{A}} \otimes I_\mathsf{B}} \cdot \ketbra{\rho}{\rho}_{\mathsf{AB}}} \\
    ={}& \tr\rbra[\Big]{\tr_\mathsf{B}\rbra[\big]{\rbra*{\ketbra{\psi}{\psi}_{\mathsf{A}} \otimes I_\mathsf{B}} \cdot \ketbra{\rho}{\rho}_{\mathsf{AB}}}}\\
    ={}& \tr\rbra[\big]{\ketbra{\psi}{\psi}_{\mathsf{A}} \cdot \tr_\mathsf{B}\rbra*{\ketbra{\rho}{\rho}_{\mathsf{AB}}}} \\
    ={}& \tr\rbra*{\ketbra{\psi}{\psi}_{\mathsf{A}} \cdot \rho_\mathsf{A}} \\
    ={}& \mathrm{F}^2\rbra{\rho, \ket{\psi}}.
\end{align*}
\end{proof}

According to \cref{prop:prob-to-pure}, we can write 
\begin{equation} \label{eq:W-amp}
    W\ket{0}_\mathsf{A}\ket{0}_\mathsf{B}\ket{0}_\mathsf{A'}\ket{0}_\mathsf{B'} = \mathrm{F}\rbra{\rho,\ket{\psi}}\ket{0}_{\mathsf{A}}\ket{0}_{\mathsf{B}}\ket{\phi}_{\mathsf{A'B'}}+\sqrt{1-\mathrm{F}^2\rbra{\rho, \ket{\psi}}}\ket{\phi^{\perp}}_{\mathsf{ABA'B'}}
\end{equation}
for some normalized pure states $\ket{\phi}$ and $\ket{\phi^{\perp}}$ such that $\rbra*{\ketbra{0}{0}_{\mathsf{A}}\otimes\ketbra{0}{0}_{\mathsf{B}}\otimes I_{\mathsf{A'B'}}}\ket{\phi^{\perp}} = 0$.
With this, we can therefore estimate $\mathrm{F}\rbra{\rho, \ket{\psi}}$ using the square root amplitude estimation (\cref{thm:sqrt}). 
We formally state the result and give its rigorous proof as follows. 

\begin{algorithm*}[ht]
    \caption{Quantum query algorithm for estimating fidelity to a pure state.}
    \label{algo:fidelity_to_pure}
    \begin{algorithmic}[1]
        \Require Quantum oracles $U$ and $V$ that prepare $n$-qubit purifications of a $k$-qubit mixed state $\rho$ and a $k$-qubit pure state $\ketbra{\psi}{\psi}$, respectively (as well as $U^\dag$, $V^\dag$, and their controlled versions); the desired additive error $\varepsilon \in \rbra{0, 1}$. 

        \Ensure An estimate of $\mathrm{F}\rbra{\rho, \ket{\psi}}$ to within additive error $\varepsilon$ with probability at least $2/3$.
        
        \State Let unitary operator 
        \begin{align*}
            W' &= \rbra*{I_\mathsf{C}\otimes \ketbra{0}{0}_{\mathsf{A}}\otimes\ketbra{0}{0}_{\mathsf{B}}\otimes I_{\mathsf{A'B'}}+X_{\mathsf{C}}\otimes \rbra*{I_{\mathsf{AB}} - \ketbra{0}{0}_{\mathsf{A}}\otimes\ketbra{0}{0}_{\mathsf{B}}}\otimes I_{\mathsf{A'B'}}} \\
            & \qquad \cdot  \rbra{V_{\mathsf{AB}}^\dag\otimes I_{\mathsf{A'B'}}}\cdot \mathsf{SWAP}_{\mathsf{BB'}} \cdot \rbra{U_{\mathsf{AB}}\otimes V_{\mathsf{A'B'}}}.
        \end{align*}

        \State \Return $\mathsf{SqrtAmpEst}\rbra{W', \varepsilon}$ by \cref{thm:sqrt}.
    \end{algorithmic}
\end{algorithm*}

\begin{theorem}[Estimating fidelity to a pure state given purified quantum query access] \label{thm:fidelity_to_pure}
    Suppose that $U$ and $V$ are quantum unitary operators that prepare $n$-qubit purifications of a $k$-qubit mixed state $\rho = \tr_\mathsf{B}\rbra{\ketbra{\rho}{\rho}_{\mathsf{AB}}}$ and a $k$-qubit pure state $\ketbra{\psi}{\psi} = \ketbra{\psi}{\psi}_{\mathsf{A'}}$, respectively, as in \cref{eq:purification_rho,eq:purification_psi}.
    For $\varepsilon \in \rbra{0,1}$, there is a quantum query algorithm that estimates $\mathrm{F}\rbra{\rho, \ket{\psi}}$ to within additive error $\varepsilon$ with probability at least $2/3$ using $O\rbra{1/\varepsilon}$ queries to (controlled-)$U$, (controlled-)$V$, and their inverses and $O\rbra{n/\varepsilon}$ elementary quantum gates in addition to oracle queries.
\end{theorem}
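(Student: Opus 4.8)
The plan is to combine the encoding circuit $W$ of \cref{eq:fidelity_w} — whose key property is recorded in \cref{prop:prob-to-pure}, equivalently the amplitude decomposition \cref{eq:W-amp} — with the square root amplitude estimation of \cref{thm:sqrt}. The only mismatch to resolve is that \cref{thm:sqrt} flags the ``good'' branch by a single qubit being in state $\ket{0}$, whereas \cref{eq:W-amp} flags it by the multi-qubit projector $\ketbra{0}{0}_{\mathsf{A}}\otimes\ketbra{0}{0}_{\mathsf{B}}$. I would close this gap with the operator $W'$ of \cref{algo:fidelity_to_pure}: adjoin a fresh flag qubit $\mathsf{C}$ initialized to $\ket{0}$, apply $W$, and then apply the block operator $I_{\mathsf{C}}\otimes P + X_{\mathsf{C}}\otimes\rbra{I_{\mathsf{AB}}-P}$ with $P = \ketbra{0}{0}_{\mathsf{A}}\otimes\ketbra{0}{0}_{\mathsf{B}}$, which keeps $\mathsf{C}$ in $\ket{0}$ when $\mathsf{AB}$ is in $\ket{0}_{\mathsf{A}}\ket{0}_{\mathsf{B}}$ and flips it otherwise.

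First I would record two easy facts about $W'$. It is unitary: since $P$ is a projector, the block operator above is block-diagonal with respect to $\operatorname{range}(P)\oplus\operatorname{range}(I-P)$ with unitary blocks $I$ and $X_{\mathsf{C}}$, and it is realized by a single multiply-controlled NOT using no calls to $U$ or $V$; hence one application of $W'$ costs one query each to $U$, $V$, and $V^\dag$, and the same holds for $(W')^\dag$ and for controlled-$W'$ built from the controlled oracles. Second, feeding \cref{eq:W-amp} through the block operator and using that $\ket{\phi^{\perp}}$ lies in $\operatorname{range}(I-P)$ (the orthogonality stated just after \cref{eq:W-amp}), I obtain
\begin{align*}
    W'\ket{0}_{\mathsf{C}}\ket{0}_{\mathsf{A}}\ket{0}_{\mathsf{B}}\ket{0}_{\mathsf{A'}}\ket{0}_{\mathsf{B'}} ={}& \mathrm{F}\rbra{\rho,\ket{\psi}}\,\ket{0}_{\mathsf{C}}\ket{0}_{\mathsf{A}}\ket{0}_{\mathsf{B}}\ket{\phi}_{\mathsf{A'B'}} \\
    & {}+ \sqrt{1-\mathrm{F}^2\rbra{\rho,\ket{\psi}}}\,\ket{1}_{\mathsf{C}}\ket{\phi^{\perp}}_{\mathsf{ABA'B'}},
\end{align*}
which is precisely the input form required by \cref{thm:sqrt} with flag register $\mathsf{C}$ and $p = \mathrm{F}^2\rbra{\rho,\ket{\psi}}$.

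Finally I would invoke \cref{thm:sqrt}: $\mathsf{SqrtAmpEst}\rbra{W',\varepsilon}$ returns an estimate of $\sqrt{p} = \mathrm{F}\rbra{\rho,\ket{\psi}}$ to within additive error $\varepsilon$ with probability at least $2/3$, using $O\rbra{1/\varepsilon}$ applications of (controlled-)$W'$ and its inverse, hence $O\rbra{1/\varepsilon}$ queries to (controlled-)$U$, (controlled-)$V$, and their inverses. The reductions on register dimensions (that $\mathsf{A},\mathsf{A'}$ agree and, without loss of generality, $\mathsf{B},\mathsf{B'}$ agree) are justified by the footnote in \cref{sec:algo}. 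I do not expect a real obstacle beyond bookkeeping: the single-qubit-flag requirement of \cref{thm:sqrt} is the only wrinkle, and the $W'$ gadget dispatches it, so the whole argument reduces to \cref{prop:prob-to-pure} together with a count of oracle calls.
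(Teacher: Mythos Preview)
Your proposal is correct and follows essentially the same route as the paper's proof: both adjoin the flag qubit $\mathsf{C}$, define the same $W'$ via the block operator $I_{\mathsf{C}}\otimes P + X_{\mathsf{C}}\otimes(I_{\mathsf{AB}}-P)$ applied after $W$, derive the identical amplitude decomposition on $\mathsf{C}$, and finish by invoking $\mathsf{SqrtAmpEst}(W',\varepsilon)$ with the same query count. Your extra remarks on the unitarity of the block operator and its oracle-free implementation are sound elaborations that the paper leaves implicit.
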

\begin{proof}
The formal algorithm is given in \cref{algo:fidelity_to_pure}. 
To make the proof rigorous, we introduce an auxiliary system $\mathsf{C}$ that consists of one qubit.
Let
\begin{equation}\label{eq:fidelity_w'}
    W' = \rbra*{\rbra[\big]{I_\mathsf{C}\otimes \ketbra{0}{0}_{\mathsf{A}}\otimes\ketbra{0}{0}_{\mathsf{B}}+X_{\mathsf{C}}\otimes \rbra*{I_{\mathsf{AB}} - \ketbra{0}{0}_{\mathsf{A}}\otimes\ketbra{0}{0}_{\mathsf{B}}}}\otimes I_{\mathsf{A'B'}}} W,
\end{equation}
where $W$ is defined by \cref{eq:fidelity_w}.
According to \cref{eq:W-amp}, we have
\[
    W'\ket{0}_\mathsf{C}\ket{0}_\mathsf{A}\ket{0}_\mathsf{B}\ket{0}_\mathsf{A'}\ket{0}_\mathsf{B'} = 
    \mathrm{F}\rbra{\ket{\psi}, \rho}\ket{0}_{\mathsf{C}}\ket{0}_{\mathsf{A}}\ket{0}_{\mathsf{B}}\ket{\phi}_{\mathsf{A'B'}}+\sqrt{1-\mathrm{F}\rbra{\ket{\psi}, \rho}^2}\ket{1}_{\mathsf{C}}\ket{\phi^{\perp}}_{\mathsf{ABA'B'}}.
\]
Finally, we can obtain an estimate $\tilde{x} = \mathsf{SqrtAmpEst}\rbra{W', \varepsilon}$ of $\mathrm{F}\rbra{\rho, \ket{\psi}}$ to within additive error $\varepsilon$ (i.e., $\abs*{\tilde{x} - \mathrm{F}\rbra{\rho, \ket{\psi}}} < \varepsilon$) with probability at least $2/3$ using $O\rbra{1/\varepsilon}$ queries to $W'$.
According to \cref{eq:fidelity_w',eq:fidelity_w}, one query to $W'$ consists of one query to $U$, two queries to $V$, and $O(n)$ elementary quantum gates (the SWAP operation on $\mathsf{BB'}$ and the multi-controlled-$X$ gate on $\mathsf{CAB}$) in addition to oracle queries.
Therefore, the way we obtain $\tilde{x}$ uses $O\rbra{1/\varepsilon}$ queries to both $U$ and $V$ and $O\rbra{n/\varepsilon}$ elementary quantum gates in addition to oracle queries by \cref{thm:sqrt}.
\end{proof}

\subsection{Generalizations} \label{sec:general}

In this subsection, we naturally extend the unitary operator in \cref{fig:q-von-ex} to the scenario where $V$ prepares a purification of an arbitrary mixed state $\sigma$.
That is, we replace \cref{eq:purification_psi} with
\begin{equation} 
    V_{\mathsf{A'B'}}\ket{0}_\mathsf{A'}\ket{0}_\mathsf{B'} = \ket{\sigma}_{\mathsf{A'B'}}, \label{eq:purification_sigma}
\end{equation}
where $\sigma_{\mathsf{A'}} = \tr_{\mathsf{B'}}\rbra{\ketbra{\sigma}{\sigma}_{\mathsf{A'B'}}}$.
In this case, it turns out that $\sqrt{\tr\rbra{\rho\sigma^2}}$ is encoded in the amplitude.
Note that when $\sigma$ is pure, $\sigma^2 = \sigma$, the quantity $\sqrt{\tr\rbra{\rho\sigma^2}}$ equals the fidelity $\mathrm{F}\rbra{\rho, \sigma}$. 
Specifically, the circuit in \cref{fig:q-von-ex} outputs two bits $x_{\mathsf{A}}, x_{\mathsf{B}} \in \cbra{0, 1}$ such that 
\begin{equation}
    \Pr\sbra{x_{\mathsf{A}} = x_{\mathsf{B}} = 0} = \tr\rbra{\rho\sigma^2},
\end{equation}
which generalizes \cref{eq:prob-to-pure}. 
This can be verified by the following proposition. 

\begin{proposition} \label{prop:prob-mixed}
    $\Abs*{\rbra*{\bra{0}_{\mathsf{A}} \bra{0}_{\mathsf{B}} \otimes I_{\mathsf{A'B'}}}W\ket{0}_\mathsf{A}\ket{0}_\mathsf{B}\ket{0}_\mathsf{A'}\ket{0}_\mathsf{B'}}^2 = \tr\rbra{\rho\sigma^2}$, where $W$ is defined by \cref{eq:fidelity_w}, $U_{\mathsf{AB}}$ is defined by \cref{eq:purification_rho}, and $V_{\mathsf{A}'\mathsf{B}'}$ is defined by \cref{eq:purification_sigma}. 
\end{proposition}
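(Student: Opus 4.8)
Since the unitary operator $W$ and the oracle $U$ are exactly as in \cref{prop:prob-to-pure}, the plan is to repeat the computation there verbatim for as long as possible, keeping $\ket{\sigma}_{\mathsf{A'B'}}$ entangled rather than factoring it as $\ket{\psi}_{\mathsf{A'}}\ket{\psi'}_{\mathsf{B'}}$. Writing $\ket{\sigma}_{\mathsf{AB}} := V_{\mathsf{AB}}\ket{0}_{\mathsf{AB}}$ for the same purification carried by registers $\mathsf{A}\mathsf{B}$ (legitimate since $\mathsf{A}\cong\mathsf{A'}$ and, by the assumption made after \cref{eq:purification_psi}, $\mathsf{B}\cong\mathsf{B'}$), the first three displayed lines of that proof carry over after replacing $\bra{\psi}_{\mathsf{A}}\bra{\psi'}_{\mathsf{B}}$ by $\bra{\sigma}_{\mathsf{AB}}$ and $\ket{\psi}_{\mathsf{A'}}\ket{\psi'}_{\mathsf{B'}}$ by $\ket{\sigma}_{\mathsf{A'B'}}$, reducing the left-hand side to
\[
    \Abs*{\rbra*{\bra{\sigma}_{\mathsf{AB}}\otimes I_{\mathsf{A'B'}}}\,\mathsf{SWAP}_{\mathsf{BB'}}\,\ket{\rho}_{\mathsf{AB}}\ket{\sigma}_{\mathsf{A'B'}}}^2 .
\]
The step at which the original proof invoked that $\ket{\psi}\ket{\psi'}$ is a product (to slide the bra on $\mathsf{B}$ across the swap) is precisely where new work is needed, because $\bra{\sigma}_{\mathsf{AB}}$ is now entangled across $\mathsf{A}$ and $\mathsf{B}$.

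From here I would expand the squared overlap as
\[
    \bra{\rho}_{\mathsf{AB}}\bra{\sigma}_{\mathsf{A'B'}}\,\mathsf{SWAP}_{\mathsf{BB'}}\rbra*{\ketbra{\sigma}{\sigma}_{\mathsf{AB}}\otimes I_{\mathsf{A'B'}}}\mathsf{SWAP}_{\mathsf{BB'}}\,\ket{\rho}_{\mathsf{AB}}\ket{\sigma}_{\mathsf{A'B'}} ,
\]
apply the elementary conjugation identity $\mathsf{SWAP}_{\mathsf{BB'}}\rbra*{M_{\mathsf{AB}}\otimes I_{\mathsf{A'B'}}}\mathsf{SWAP}_{\mathsf{BB'}} = M_{\mathsf{AB'}}\otimes I_{\mathsf{BA'}}$ (relabel the second slot of $M$ from $\mathsf{B}$ to $\mathsf{B'}$), and rewrite the whole thing as a trace $\tr\sbra*{\rbra*{\ketbra{\sigma}{\sigma}_{\mathsf{AB'}}\otimes I_{\mathsf{BA'}}}\rbra*{\ketbra{\rho}{\rho}_{\mathsf{AB}}\otimes\ketbra{\sigma}{\sigma}_{\mathsf{A'B'}}}}$. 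Registers $\mathsf{B}$ and $\mathsf{A'}$ now meet only identity factors, so tracing them out first gives $\tr_{\mathsf{B}}\rbra{\ketbra{\rho}{\rho}_{\mathsf{AB}}} = \rho_{\mathsf{A}}$ and $\tr_{\mathsf{A'}}\rbra{\ketbra{\sigma}{\sigma}_{\mathsf{A'B'}}} = \widetilde{\sigma}_{\mathsf{B'}}$, leaving $\bra{\sigma}_{\mathsf{AB'}}\rbra*{\rho_{\mathsf{A}}\otimes\widetilde{\sigma}_{\mathsf{B'}}}\ket{\sigma}_{\mathsf{AB'}}$. Finally I would insert the Schmidt decomposition $\ket{\sigma}_{\mathsf{AB'}} = \sum_j\sqrt{q_j}\ket{f_j}_{\mathsf{A}}\ket{h_j}_{\mathsf{B'}}$, so that $\sigma = \sum_j q_j\ketbra{f_j}{f_j}$ and $\widetilde{\sigma}_{\mathsf{B'}} = \sum_j q_j\ketbra{h_j}{h_j}$; orthonormality of $\cbra*{\ket{h_j}}$ kills the cross terms and collapses the expression to $\sum_j q_j^2\bra{f_j}\rho\ket{f_j} = \tr\rbra*{\rho\sum_j q_j^2\ketbra{f_j}{f_j}} = \tr\rbra{\rho\sigma^2}$, as claimed. (When $\sigma = \ketbra{\psi}{\psi}$ is pure, $\sigma^2=\sigma$ and this recovers \cref{prop:prob-to-pure}; intuitively, one factor of $\sigma$ comes from the bra $\bra{\sigma}_{\mathsf{AB}}$ produced by $V^\dag$, and the second from the reduced state of $V$'s output on $\mathsf{B'}$, which in the pure case was just the unit contraction $\braket{\psi'}{\psi'}=1$.)

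The only genuinely delicate point is the register bookkeeping: one must keep straight that $\ket{\sigma}_{\mathsf{AB}}$, $\ket{\sigma}_{\mathsf{A'B'}}$, and $\ket{\sigma}_{\mathsf{AB'}}$ are ``the same'' purification moved between registers, and that $\widetilde{\sigma}_{\mathsf{B'}}$ is not literally $\sigma$ but only isospectral to it (it is $\sigma$ transposed in the Schmidt basis), which is harmless since only its eigenvalues $q_j$ enter. I expect this to be essentially the whole difficulty. A fully self-contained alternative that sidesteps the ``push $\bra{\sigma}$ past the swap'' step is to insert the Schmidt decompositions of both $\ket{\rho}_{\mathsf{AB}} = \sum_i\sqrt{p_i}\ket{a_i}_{\mathsf{A}}\ket{b_i}_{\mathsf{B}}$ and $\ket{\sigma}_{\mathsf{A'B'}}$ at the very start, apply $\mathsf{SWAP}_{\mathsf{BB'}}$ term by term, and compute the resulting norm directly; it comes out to $\sum_{i,j} p_i q_j^2\,\abs*{\braket{a_i}{f_j}}^2$, which is visibly $\tr\rbra{\rho\sigma^2}$ once one writes $\rho = \sum_i p_i\ketbra{a_i}{a_i}$ and $\sigma^2 = \sum_j q_j^2\ketbra{f_j}{f_j}$.
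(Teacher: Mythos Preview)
Your argument is correct. Through the step
\[
\tr\sbra*{\rbra*{\ketbra{\sigma}{\sigma}_{\mathsf{AB'}}\otimes I_{\mathsf{BA'}}}\rbra*{\ketbra{\rho}{\rho}_{\mathsf{AB}}\otimes\ketbra{\sigma}{\sigma}_{\mathsf{A'B'}}}}
\]
you and the paper are doing the same thing; the paper simply phrases your relabelling identity as the equality $\mathsf{SWAP}_{\mathsf{BB'}}\rbra{\ketbra{\sigma}{\sigma}_{\mathsf{AB}}\otimes I_{\mathsf{A'B'}}}\mathsf{SWAP}_{\mathsf{BB'}} = \mathsf{SWAP}_{\mathsf{AA'}}\rbra{I_{\mathsf{AB}}\otimes\ketbra{\sigma}{\sigma}_{\mathsf{A'B'}}}\mathsf{SWAP}_{\mathsf{AA'}}$, which is the same operator $\ketbra{\sigma}{\sigma}_{\mathsf{AB'}}\otimes I_{\mathsf{BA'}}$ viewed from the other side.

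The divergence is in the endgame. You trace out $\mathsf{B}$ and $\mathsf{A'}$, land on $\bra{\sigma}_{\mathsf{AB'}}\rbra{\rho_{\mathsf{A}}\otimes\widetilde{\sigma}_{\mathsf{B'}}}\ket{\sigma}_{\mathsf{AB'}}$, and finish with a Schmidt decomposition; this is elementary and self-contained, and your remark that $\widetilde{\sigma}_{\mathsf{B'}}$ is only isospectral to $\sigma$ is exactly the right caveat. The paper instead keeps the $\mathsf{SWAP}_{\mathsf{AA'}}$ in play, traces out only $\mathsf{B}$, and reduces to $\tr\!\rbra[\big]{\rbra{(I_{\mathsf{A}}\otimes\bra{\sigma}_{\mathsf{A'B'}})\,\mathsf{SWAP}_{\mathsf{AA'}}\,(I_{\mathsf{A}}\otimes\ket{\sigma}_{\mathsf{A'B'}})}^{2}\rho_{\mathsf{A}}}$, then proves as a separate lemma that the bracketed operator equals $\sigma_{\mathsf{A}}$ via coordinate-free partial-trace identities. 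What this buys is an explicit identification of the dashed box in the restructured circuit as the block-encoding $\sigma_{\mathsf{A}}\otimes I_{\mathsf{B}}$ (the construction from \cite{LC19}), which is the conceptual point the paper wants to highlight; your Schmidt computation gets the number $\tr(\rho\sigma^2)$ just as cleanly but does not isolate $\sigma_{\mathsf{A}}$ as an operator along the way.
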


\begin{proof}
    This can be shown by direct calculations generalizing the proof of \cref{prop:prob-to-pure}.
\begin{align*}
    & \Abs*{\rbra*{\bra{0}_\mathsf{A}\bra{0}_{\mathsf{B}}\otimes I_{\mathsf{A'B'}}}W\ket{0}_\mathsf{A}\ket{0}_\mathsf{B}\ket{0}_\mathsf{A'}\ket{0}_\mathsf{B'}}^2 \\
    ={}& \Abs*{\rbra*{\bra{\sigma}_{\mathsf{AB}}\otimes I_{\mathsf{A'B'}}} \cdot \mathsf{SWAP}_{\mathsf{BB'}} \cdot \ket{\rho}_{\mathsf{AB}}\ket{\sigma}_{\mathsf{A'B'}}}^2 \\
    ={}& \bra{\rho}_{\mathsf{AB}}\bra{\sigma}_{\mathsf{A'B'}} \cdot \mathsf{SWAP}_{\mathsf{BB'}} \cdot \rbra*{\ketbra{\sigma}{\sigma}_{\mathsf{AB}}\otimes I_{\mathsf{A'B'}}} \cdot \mathsf{SWAP}_{\mathsf{BB'}} \cdot \ket{\rho}_{\mathsf{AB}}\ket{\sigma}_{\mathsf{A'B'}} \\
    ={}& \tr\rbra[\big]{\mathsf{SWAP}_{\mathsf{BB'}}\rbra*{\ketbra{\sigma}{\sigma}_{\mathsf{AB}}\otimes I_{\mathsf{A'B'}}}\mathsf{SWAP}_{\mathsf{BB'}} \cdot \rbra*{\ketbra{\rho}{\rho}_{\mathsf{AB}}\otimes \ketbra{\sigma}{\sigma}_{\mathsf{A'B'}}}} \\
    ={}& \tr\rbra[\big]{\mathsf{SWAP}_{\mathsf{AA'}}\rbra*{I_{\mathsf{AB}}\otimes \ketbra{\sigma}{\sigma}_{\mathsf{A'B'}}}\mathsf{SWAP}_{\mathsf{AA'}} \cdot \rbra*{\ketbra{\rho}{\rho}_{\mathsf{AB}}\otimes \ketbra{\sigma}{\sigma}_{\mathsf{A'B'}}}} \\
    ={}& \tr\rbra[\Big]{\tr_{\mathsf{B}}\rbra[\big]{\mathsf{SWAP}_{\mathsf{AA'}}\rbra*{I_{\mathsf{AB}}\otimes \ketbra{\sigma}{\sigma}_{\mathsf{A'B'}}}\mathsf{SWAP}_{\mathsf{AA'}} \cdot \rbra*{\ketbra{\rho}{\rho}_{\mathsf{AB}}\otimes \ketbra{\sigma}{\sigma}_{\mathsf{A'B'}}}}} \\
    ={}& \tr\rbra[\big]{\mathsf{SWAP}_{\mathsf{AA'}}\rbra*{I_{\mathsf{A}}\otimes \ketbra{\sigma}{\sigma}_{\mathsf{A'B'}}}\mathsf{SWAP}_{\mathsf{AA'}} \cdot \rbra*{\tr_{\mathsf{B}}\rbra*{\ketbra{\rho}{\rho}_{\mathsf{AB}}}\otimes \ketbra{\sigma}{\sigma}_{\mathsf{A'B'}}}} \\
    ={}& \tr\rbra[\big]{\mathsf{SWAP}_{\mathsf{AA'}}\rbra*{I_{\mathsf{A}}\otimes \ketbra{\sigma}{\sigma}_{\mathsf{A'B'}}}\mathsf{SWAP}_{\mathsf{AA'}}\cdot\rbra*{\rho_{\mathsf{A}}\otimes \ketbra{\sigma}{\sigma}_{\mathsf{A'B'}}}} \\
    ={}& \tr\rbra[\big]{\rbra*{I_{\mathsf{A}}\otimes\bra{\sigma}_{\mathsf{A'B'}}}\cdot\mathsf{SWAP}_{\mathsf{AA'}}\rbra*{I_{\mathsf{A}}\otimes \ketbra{\sigma}{\sigma}_{\mathsf{A'B'}}}\mathsf{SWAP}_{\mathsf{AA'}}\cdot\rbra*{\rho_{\mathsf{A}}\otimes \ket{\sigma}_{\mathsf{A'B'}}}} \\
    ={}& \tr\rbra*{\rbra[\big]{\rbra*{I_{\mathsf{A}}\otimes\bra{\sigma}_{\mathsf{A'B'}}}\mathsf{SWAP}_{\mathsf{AA'}}\rbra*{I_{\mathsf{A}}\otimes \ket{\sigma}_{\mathsf{A'B'}}}}^2 \cdot {\rho_{\mathsf{A}}}}.
\end{align*}
The proof is completed by showing that $\rbra*{I_{\mathsf{A}}\otimes\bra{\sigma}_{\mathsf{A'B'}}}\mathsf{SWAP}_{\mathsf{AA'}}\rbra*{I_{\mathsf{A}}\otimes \ket{\sigma}_{\mathsf{A'B'}}} = \sigma_{\mathsf{A}}$. 
To see this, note that
\begin{align*}
    & \rbra*{I_{\mathsf{A}}\otimes\bra{\sigma}_{\mathsf{A'B'}}}\mathsf{SWAP}_{\mathsf{AA'}}\rbra*{I_{\mathsf{A}}\otimes \ket{\sigma}_{\mathsf{A'B'}}} \\
    ={} & \tr_{\mathsf{A'B'}}\rbra[\big]{\mathsf{SWAP}_{\mathsf{AA'}} \cdot \rbra*{I_{\mathsf{A}}\otimes \ketbra{\sigma}{\sigma}_{\mathsf{A'B'}}}} \\
    ={} & \tr_{\mathsf{A'}}\rbra[\Big]{\mathsf{SWAP}_{\mathsf{AA'}} \cdot \rbra[\big]{I_{\mathsf{A}}\otimes \tr_{\mathsf{B'}}\rbra*{\ketbra{\sigma}{\sigma}_{\mathsf{A'B'}}}}} \\
    ={} & \tr_{\mathsf{A'}}\rbra[\big]{\mathsf{SWAP}_{\mathsf{AA'}}\cdot\rbra*{I_{\mathsf{A}}\otimes \sigma_{\mathsf{A'}}}} \\
    ={} & \tr_{\mathsf{A'}}\rbra[\big]{\rbra*{\sigma_{\mathsf{A}}\otimes I_{\mathsf{A'}}}\cdot\mathsf{SWAP}_{\mathsf{AA'}}} \\
    ={} & \sigma_{\mathsf{A}} \cdot \tr_{\mathsf{A'}}\rbra{\mathsf{SWAP}_{\mathsf{AA'}}} \\
    ={} & \sigma_{\mathsf{A}} \cdot I_{\mathsf{A}} \\
    ={} & \sigma_{\mathsf{A}}.
\end{align*}
\end{proof}

Therefore, with a similar analysis, we can estimate the quantity $\sqrt{\tr\rbra{\rho\sigma^2}}$ by \cref{prop:prob-mixed} with the same complexity as \cref{thm:fidelity_to_pure}, stated as follows.

\begin{theorem}\label{thm:trace_rho_sigma2}
    Suppose that $U$ and $V$ are quantum unitary operators that prepare $n$-qubit purifications of $k$-qubit mixed states $\rho = \tr_\mathsf{B}\rbra{\ketbra{\rho}{\rho}_{\mathsf{AB}}}$ and $\sigma =\tr_\mathsf{B'}\rbra{\ketbra{\sigma}{\sigma}_{\mathsf{A'B'}}}$, respectively, as in \cref{eq:purification_rho,eq:purification_sigma}.
    For $\varepsilon \in \rbra{0,1}$, there is a quantum query algorithm that estimates $\sqrt{\tr\rbra{\rho\sigma^2}}$ to within additive error $\varepsilon$ with probability at least $2/3$ using $O\rbra{1/\varepsilon}$ queries to \mbox{(controlled-)$U$}, \mbox{(controlled-)$V$}, and their inverses and $O\rbra{n/\varepsilon}$ elementary quantum gates in addition to oracle queries.
\end{theorem}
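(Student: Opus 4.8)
The plan is to reuse the packaging of \cref{thm:fidelity_to_pure} verbatim, with the quantity $\mathrm{F}^2\rbra{\rho,\ket{\psi}}$ replaced by $\tr\rbra{\rho\sigma^2}$. First I would invoke \cref{prop:prob-mixed}: under \cref{eq:purification_rho,eq:purification_sigma} and with $W$ the same circuit as in \cref{eq:fidelity_w}, it gives
\[
    \Abs*{\rbra*{\bra{0}_{\mathsf{A}} \bra{0}_{\mathsf{B}} \otimes I_{\mathsf{A'B'}}}W\ket{0}_\mathsf{A}\ket{0}_\mathsf{B}\ket{0}_\mathsf{A'}\ket{0}_\mathsf{B'}}^2 = \tr\rbra{\rho\sigma^2}.
\]
Since $0 \le \tr\rbra{\rho\sigma^2} \le 1$, this lets me write, exactly as in \cref{eq:W-amp},
\[
    W\ket{0}_\mathsf{A}\ket{0}_\mathsf{B}\ket{0}_\mathsf{A'}\ket{0}_\mathsf{B'} = \sqrt{\tr\rbra{\rho\sigma^2}}\,\ket{0}_{\mathsf{A}}\ket{0}_{\mathsf{B}}\ket{\phi}_{\mathsf{A'B'}} + \sqrt{1 - \tr\rbra{\rho\sigma^2}}\,\ket{\phi^{\perp}}_{\mathsf{ABA'B'}}
\]
for normalized pure states $\ket{\phi}, \ket{\phi^{\perp}}$ with $\rbra*{\ketbra{0}{0}_{\mathsf{A}}\otimes\ketbra{0}{0}_{\mathsf{B}}\otimes I_{\mathsf{A'B'}}}\ket{\phi^{\perp}} = 0$.

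Next, to match the single-flag-qubit interface required by the square root amplitude estimation of \cref{thm:sqrt}, I would append a one-qubit register $\mathsf{C}$ and form $W'$ precisely as in \cref{algo:fidelity_to_pure} / \cref{eq:fidelity_w'}, i.e., conditionally flipping $\mathsf{C}$ on the event that $\mathsf{AB}$ is not in state $\ket{0}\ket{0}$. This yields
\[
    W'\ket{0}_\mathsf{C}\ket{0}_\mathsf{A}\ket{0}_\mathsf{B}\ket{0}_\mathsf{A'}\ket{0}_\mathsf{B'} = \sqrt{\tr\rbra{\rho\sigma^2}}\,\ket{0}_{\mathsf{C}}\ket{0}_{\mathsf{A}}\ket{0}_{\mathsf{B}}\ket{\phi}_{\mathsf{A'B'}} + \sqrt{1 - \tr\rbra{\rho\sigma^2}}\,\ket{1}_{\mathsf{C}}\ket{\phi^{\perp}}_{\mathsf{ABA'B'}},
\]
so running $\mathsf{SqrtAmpEst}\rbra{W', \varepsilon}$ returns an estimate of $\sqrt{\tr\rbra{\rho\sigma^2}}$ to within additive error $\varepsilon$ with probability at least $2/3$ using $O\rbra{1/\varepsilon}$ queries to $W'$. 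Finally, by the definitions of $W'$ and $W$, one query to $W'$ costs exactly one query to $U$ and two queries to $V$ (one to $V$ and one to $V^\dag$), since the prefixed controlled single-qubit gates acting on $\mathsf{C}$, $\mathsf{A}$, $\mathsf{B}$ and the $\mathsf{SWAP}_{\mathsf{BB'}}$ are oracle-free; controlled versions of $W'$ are obtained by controlling each constituent. Hence the total cost is $O\rbra{1/\varepsilon}$ queries to (controlled-)$U$, (controlled-)$V$, and their inverses.

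I do not expect a genuine obstacle here: essentially all of the content is already contained in \cref{prop:prob-mixed}, whose verification appears above, and the remainder is the same reduction used to prove \cref{thm:fidelity_to_pure}. The only points deserving a line of care are (i) that \cref{thm:sqrt} expects a binary flag, which the reflection controlling $\mathsf{C}$ resolves, and (ii) the harmless normalization (as in the discussion preceding \cref{sec:general}) that $\mathsf{B}$ and $\mathsf{B'}$ may be taken to have the same dimension so that $\mathsf{SWAP}_{\mathsf{BB'}}$ is well defined.
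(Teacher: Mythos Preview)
Your proposal is correct and follows exactly the approach the paper intends: the paper itself merely states that the result follows ``with a similar analysis'' to \cref{thm:fidelity_to_pure} using \cref{prop:prob-mixed}, and you have spelled out that analysis faithfully, including the $W'$ construction and the query count of one $U$ and two $V$ per call.
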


To provide an intuitive understanding of our encoding scheme, we restructure the circuit in \cref{fig:q-von-ex} into the form shown in \cref{fig:q-von-ex2}.
\begin{figure} [htp]
    \centering
    \begin{quantikz}
        \lstick{$\ket{0}_{\mathsf{A}}$} & \gate[2]{U} & \gategroup[wires=4,steps=5,style={dashed}]{$\sigma_{\mathsf{A}}\otimes I_{\mathsf{B}}$} & & \permute{3,2,1,4} & & & &\\
        \lstick{$\ket{0}_{\mathsf{B}}$} & & & & & & & & \\
        \setwiretype{n} & & \midstick{$\ket{0}_{\mathsf{A'}}$} & \gate[2]{V} \setwiretype{q} & & \gate[2]{V^{\dagger}} & \midstick{$\ket{0}_{\mathsf{A'}}$} \\
        \setwiretype{n} & & \midstick{$\ket{0}_{\mathsf{B'}}$} & \setwiretype{q} & & & \midstick{$\ket{0}_{\mathsf{B'}}$}
    \end{quantikz}
    \caption{Restructured circuit of \cref{fig:q-von-ex}, illustrating the operator $\sigma_{\mathsf{A}}\otimes I_{\mathsf{B}}$ and its action on the purification $\ket{\rho}_{\mathsf{AB}}$.}
    \label{fig:q-von-ex2}
\end{figure}
This revised visualization explicitly reveals the construction of an operator $\sigma_{\mathsf{A}}\otimes I_{\mathsf{B}}$ acting on the purification $\ket{\rho}_{\mathsf{AB}}$ of $\rho_{\mathsf{A}}$.
Since
\[
\Abs*{\rbra*{\sigma_{\mathsf{A}}\otimes I_{\mathsf{B}}}\ket{\rho}_{\mathsf{AB}}}^2 = \tr\rbra*{\rho\sigma^2},
\]
we clearly see that $\sqrt{\tr\rbra{\rho\sigma^2}}$ is encoded into the amplitude of an efficiently preparable quantum state.
While the encoding pattern for $\sigma$ in \cref{fig:q-von-ex2} had previously appeared in~\cite{LC19}, people mainly focused on operator properties without explicitly deriving the amplitude structure of the resulting pure state.

\subsection{An implication: Generalized pure-state fidelity estimation} \label{sec:imp}

Our quantum algorithm provided in \cref{thm:main} suggests a new approach to estimating the fidelity between pure states, where the pure states are given through purified quantum query access, or equivalently prepared by (purified) quantum channels. 
The purified quantum query access to a (mixed) quantum state can be understood as a purified version of the quantum channel that prepares the quantum state (cf.\ \cite{GLM+22}). 
Formally, a quantum channel $\mathcal{E}$ is said to prepare a quantum state $\rho$, if $\rho = \mathcal{E}\rbra{\ketbra{0}{0}}$.
A purified version of $\mathcal{E}$ is a unitary operator $U$ acting on two subsystems $\mathsf{A}$ and $\mathsf{B}$ such that $\mathcal{E}\rbra{\sigma} = \tr_{\mathsf{B}} \rbra{ U \rbra{\sigma_{\mathsf{A}} \otimes \ketbra{0}{0}_{\mathsf{B}}} U^\dag }$ for every mixed quantum state $\sigma$. 
Then, $\rho$ can be prepared by the unitary operator $U$ in the way that its purification is prepared by $\ket{\rho}_{\mathsf{AB}} = U \ket{0}_{\mathsf{A}} \ket{0}_{\mathsf{B}}$ and $\rho$ is obtained by tracing out the subsystem $\mathsf{B}$, i.e., $\rho = \tr_{\mathsf{B}}\rbra{\ketbra{\rho}{\rho}_{\mathsf{AB}}}$. 
In the following, we state the special case of \cref{thm:main} for pure-state fidelity estimation.

\begin{corollary} [Pure-state fidelity estimation]\label{cor:pure-fidelity}
    Given purified quantum query access to two $n$-qubit pure states $\ket{\varphi}$ and $\ket{\psi}$, the fidelity $\mathrm{F}\rbra{\ket{\varphi}, \ket{\psi}} = \abs{\braket{\varphi}{\psi}}$ can be estimated to within additive error $\varepsilon$ with query complexity $O\rbra{1/\varepsilon}$ and $O\rbra{n/\varepsilon}$ elementary quantum gates in addition to oracle queries.
\end{corollary}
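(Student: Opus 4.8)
The plan is to obtain \cref{cor:pure-fidelity} as an immediate special case of \cref{thm:fidelity_to_pure} (equivalently \cref{thm:main}) by taking the ``mixed'' state $\rho$ there to be the pure state $\ketbra{\varphi}{\varphi}$. First I would observe that purified quantum query access to a pure state $\ket{\varphi}$ is merely a particular instance of purified quantum query access to a mixed state: if $U_{\mathsf{AB}}$ satisfies $\rho_{\mathsf{A}} = \tr_{\mathsf{B}}\rbra{\ketbra{\rho}{\rho}_{\mathsf{AB}}} = \ketbra{\varphi}{\varphi}$ with $\ket{\rho}_{\mathsf{AB}} = U_{\mathsf{AB}}\ket{0}_{\mathsf{A}}\ket{0}_{\mathsf{B}}$, then, since the reduced state on $\mathsf{A}$ has rank one, the Schmidt decomposition forces the purification to factorize as $\ket{\rho}_{\mathsf{AB}} = \ket{\varphi}_{\mathsf{A}} \otimes \ket{\chi}_{\mathsf{B}}$ for some normalized pure state $\ket{\chi}_{\mathsf{B}}$ (the irrelevant global phase being absorbed). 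In particular, $U$ is exactly of the form required in \cref{eq:purification_rho}, and after the usual dimension-matching convention for the purification registers (as set up at the start of \cref{sec:algo}), we are precisely in the situation of \cref{thm:fidelity_to_pure} with $\rho = \ketbra{\varphi}{\varphi}$ and $\sigma = \ketbra{\psi}{\psi}$.

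Next I would simply unfold what \cref{thm:fidelity_to_pure} returns in this case. Its output is an estimate of $\mathrm{F}\rbra{\rho, \ket{\psi}} = \sqrt{\bra{\psi}\rho\ket{\psi}}$ to within additive error $\varepsilon$ with success probability at least $2/3$, using $O\rbra{1/\varepsilon}$ queries to (controlled-)$U$, (controlled-)$V$, and their inverses. Substituting $\rho = \ketbra{\varphi}{\varphi}$ gives $\sqrt{\braket{\psi}{\varphi}\braket{\varphi}{\psi}} = \abs{\braket{\varphi}{\psi}} = \mathrm{F}\rbra{\ket{\varphi}, \ket{\psi}}$, which is exactly the quantity claimed in the corollary, and the $O\rbra{1/\varepsilon}$ query bound is inherited verbatim. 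The connection to preparation by quantum channels is the observation made just before the corollary statement: a purified version of a quantum channel $\mathcal{E}$ with $\mathcal{E}\rbra{\ketbra{0}{0}} = \ketbra{\varphi}{\varphi}$ is precisely such a unitary $U$, so the statement also covers pure states specified only via the channels (or, equivalently, purified circuits with redundant output qubits) that prepare them.

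There is essentially no hard step here; the only point that needs a line of justification is the rank-one purification fact used above to guarantee that the reduced state on $\mathsf{A}$ is \emph{literally} $\ketbra{\varphi}{\varphi}$, together with the trivial bookkeeping that aligns the ancilla-register conventions of \cref{eq:purification_rho,eq:purification_psi}. This is also the precise point of contrast with \cite{Wan24}: there the oracles are assumed to prepare $\ket{\varphi}$ and $\ket{\psi}$ with no ancilla at all, whereas here the factor $\ket{\chi}_{\mathsf{B}}$ may be an arbitrary pure state, so \cref{cor:pure-fidelity} strictly generalizes that result while retaining the optimal $O\rbra{1/\varepsilon}$ query complexity (optimality following, as noted earlier, from the $\Omega\rbra{1/\varepsilon}$ lower bound of \cite{BBC+01,NW99}).
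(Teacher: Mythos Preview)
Your proposal is correct and matches the paper's approach: the corollary is presented there as an immediate specialization of \cref{thm:fidelity_to_pure} with $\rho = \ketbra{\varphi}{\varphi}$, and the paper does not give any further proof beyond the surrounding discussion. The only detail you add that the paper leaves implicit is the rank-one Schmidt factorization of the purification, which is exactly the right justification.
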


Moreover, our approach also applies to estimating the trace distance between two pure states.
This is obtained by noting that if $\rho = \ketbra{\varphi}{\varphi}$ is also a pure state, the circuit in \cref{fig:q-von-ex} outputs tow bits $x_{\mathsf{A}}, x_{\mathsf{B}} \in \cbra{0, 1}$ such that 
\[
\Pr\sbra{x_{\mathsf{A}} + x_{\mathsf{B}} \neq 0} = 1 - \mathrm{F}^2\rbra{\ket{\varphi}, \ket{\psi}} = \rbra*{ \frac{1}{2}\Abs*{ \ketbra{\varphi}{\varphi} - \ketbra{\psi}{\psi} }_1 }^2,
\]
which is exactly the square of the trace distance between $\ket{\varphi}$ and $\ket{\psi}$.
Using an argument similar to the proof of \cref{thm:fidelity_to_pure}, we have the following quantum algorithm for pure-state trace distance estimation.

\begin{corollary}[Pure-state trace distance estimation] \label{cor:pure-td}
    Given purified quantum query access to two $n$-qubit pure states $\ket{\varphi}$ and $\ket{\psi}$, the trace distance $\frac{1}{2} \Abs{ \ketbra{\varphi}{\varphi} - \ketbra{\psi}{\psi} }_1 = \sqrt{1 - \abs{\braket{\varphi}{\psi}}^2}$ can be estimated to within additive error $\varepsilon$ with query complexity $O\rbra{1/\varepsilon}$ and $O\rbra{n/\varepsilon}$ elementary quantum gates in addition to oracle queries.
\end{corollary}

\cref{cor:pure-fidelity,cor:pure-td} respectively generalize the pure-state fidelity estimation in \cite[Theorem IV.2]{Wan24} and the pure-state trace distance estimation in \cite[Theorem IV.1]{Wan24} to a more general input model.
By comparison, the prior approach for pure-state fidelity and trace distance estimation \cite{Wan24} requires that the pure states should be prepared by a unitary quantum channel.
Specifically, Ref.~\cite{Wan24} assumes two unitary oracles $U_{\varphi}$ and $U_{\psi}$ such that $U_{\varphi} \ket{0} = \ket{\varphi}$ and $U_{\psi} \ket{0} = \ket{\psi}$ (see \cref{sec:pure-fidelity} for details). 
The input model employed in \cite{Wan24} is more restricted than the purified quantum query access model.
For example, $U_{\varphi} \otimes I$ always provides purified quantum query access to $\ket{\varphi}$, where $I$ is the identity operator; however, it is unlikely that $U_{\varphi}$ could be implemented by purified quantum query access to $\ket{\varphi}$.

In addition to adopting a more general input model than \cite{Wan24}, \cref{cor:pure-fidelity,cor:pure-td} attain the same optimal query complexity of $O(1/\varepsilon)$, where the lower bound (noted in \cite{Wan24}) is implied in \cite{BBC+01,NW99} (see \cref{thm:lower-bound-pure}).

\section{Sample-Optimal Approach} \label{sec:sample}

As an important implication of our query-optimal approach to estimating the fidelity to a pure state in \cref{sec:algo}, we can further obtain a sample-optimal approach to the same task by quantum sample-to-query lifting \cite{WZ25a,WZ25b,TWZ25,CWZ25}. 

\begin{theorem} [Estimating fidelity to a pure state given sample access] \label{thm:est-fid-sampl}
    Suppose that $\rho$ is an unknown $k$-qubit mixed quantum state and $\ket{\psi}$ is an unknown $k$-qubit pure state. 
    For $\varepsilon \in \rbra{0, 1}$, there is a quantum sample algorithm that estimates $\mathrm{F}\rbra{\rho, \ket{\psi}}$ to within additive error $\varepsilon$ with probability at least $2/3$ using $O\rbra{1/\varepsilon^2}$ samples of $\rho$ and $\ket{\psi}$ and $\widetilde{O}\rbra{\poly\rbra{k}/\varepsilon^8}$ elementary quantum gates. 
\end{theorem}

\subsection{Estimating fidelity to a pure state} \label{sec:sample_pure}

To prove \cref{thm:est-fid-sampl}, we use the following version of quantum sample-to-query lifting adapted from \cite{UNWT25}, which refines the approach in \cite{TWZ25} consisting of three main steps: state conditionalization with phase \cite{Kre21,GZ25} (equipped with density matrix exponentiation \cite{LMR14,KLL+17,GKP+25}), unitary controllization \cite{TW25}, and random purification (based on the results in \cite{SW22,CWZ24}).
In particular, the last step dominates the number of (additional) elementary gates due to the use of the Schur transform \cite{Har05,BCH06,BCH07,Kro19,BFG+25} with the current best implementation in \cite{BFG+25}. 

For readability, we introduce the following norms. 
For a (finite-dimensional) matrix $A$, the trace norm (also known as the Schatten $1$-norm) of $A$ is defined by $\Abs{A}_1 = \tr\rbra{\sqrt{A^\dag A}}$. 
For a linear operator $\mathcal{E}$ that maps a matrix $A$ to another matrix $\mathcal{E}\rbra{A}$ of the same dimension (where $\mathcal{E}$ is also called a superoperator), the trace norm (also known as the induced trace norm in \cite{Wat18}) is defined by $\Abs{\mathcal{E}}_{\tr} = \sup_{A \colon \Abs{A}_1 \leq 1} \Abs{\mathcal{E}\rbra{A}}_1$. 

\begin{theorem}[Quantum sample-to-query lifting, adapted from {\cite[Theorem 1.5]{TWZ25}} and {\cite[Theorem 26]{UNWT25}}] \label{thm:query-to-sample}
    Given purified quantum query access to a $\rbra{2k+1}$-qubit unitary oracle $U$ that prepares the purification of an unknown $k$-qubit mixed quantum state $\rho$, for any quantum query algorithm $A^U$ that uses $Q$ queries to \mbox{(controlled-)$U$} and \mbox{(controlled-)$U^\dag$}, $A^U$ can be approximated to precision $0.01$ (in the trace norm distance) by a quantum channel $\mathcal{E}$ implemented by $O\rbra{Q^2}$ samples of $\rho$ and $\widetilde{O}\rbra{Q^8 \poly\rbra{k}}$ additional elementary quantum gates. 
    
    More specifically, if $A^U$ has the form
    \[
    A^U = G_Q \cdot F_Q \cdot \dots \cdot G_1 \cdot F_1 \cdot G_0,
    \]
    where $G_0, \dots, G_Q$ are unitary operators independent of $U$ and $F_1, \dots, F_Q$ are either \mbox{(controlled-)$U$} or \mbox{(controlled-)$U^\dag$} acting on certain qubits, then the quantum channel $\mathcal{E}$ satisfies that
    \[
    \Abs*{ \mathcal{A}^U - \mathcal{E}}_{\textup{tr}} \leq 0.01,
    \]
    where $\mathcal{A}^U \colon \sigma \mapsto A^U \sigma \rbra{A^U}^\dag$ is the quantum unitary channel induced by $A^U$, 
    and $\mathcal{E}$ is of the form
    \[
    \mathcal{E} = \mathcal{G}_Q \circ \mathcal{F}_Q \circ \dots \circ \mathcal{G}_1 \circ \mathcal{F}_1 \circ \mathcal{G}_0,
    \]
    where $\circ$ denotes the composition of superoperators, $\mathcal{G}_j \colon \sigma \mapsto G_j \sigma G_j^\dag$ is the quantum unitary channel induced by the unitary operator $G_j$ for each $0 \leq j \leq Q$, and $\mathcal{F}_1, \dots, \mathcal{F}_Q$ are implemented by $O\rbra{Q^2}$ samples of $\rho$ and $\widetilde{O}\rbra{Q^8 \poly\rbra{k}}$ elementary quantum gates in total. 
\end{theorem}

The number of qubits of the unitary oracle $U$ required in \cref{thm:query-to-sample} is compatible with the query algorithm in \cref{thm:fidelity_to_pure}.
Indeed, every $k$-qubit quantum state $\rho$ can be purified using $k$ additional qubits (cf.\ \cite[Section 2.5]{NC10}), and appending one ancilla qubit gives a $\rbra{2k+1}$-qubit purification of the $k$-qubit quantum state $\rho$.
For the pure state $\ket{\psi}$, we may use the purification $\ket{\psi}_{\mathsf{A}'}\ket{0}_{\mathsf{B}'}$ with $\mathsf{B}'$ consisting of $\rbra{k+1}$ qubits.
Thus, although \cref{thm:fidelity_to_pure} allows arbitrary $n$-qubit purifications, in the proof of \cref{thm:est-fid-sampl} below, we instantiate it with $n=2k+1$, so both $U$ and $V$ satisfy the dimensional assumption of \cref{thm:query-to-sample}.

Now we are ready to prove \cref{thm:est-fid-sampl}.

\begin{proof}[Proof of \cref{thm:est-fid-sampl}]
    Suppose that $U$ and $V$ are $\rbra{2k+1}$-qubit unitary oracles that prepare purifications of $\rho$ and $\ketbra{\psi}{\psi}$, respectively.
    The final quantum algorithm constructed below will use only samples of $\rho$ and $\ketbra{\psi}{\psi}$. 
    These oracles $U$ and $V$ are only used to witness how the quantum query algorithm is converted to the final quantum algorithm (that uses only samples).
    By \cref{thm:fidelity_to_pure}, for any $\varepsilon\in \rbra{0,1}$, there is a quantum query algorithm $A^{U,V}$ that estimates $\mathrm{F}\rbra{\rho, \ket{\psi}}$ to within additive error $\varepsilon$ with probability at least $0.99$ using $O\rbra{1/\varepsilon}$ queries to \mbox{(controlled-)$U$}, \mbox{(controlled-)$V$}, and their inverses.

    We first eliminate the queries to $U$ while treating all queries to $V$ as part of the quantum gates that are independent of $U$. 
    To make this explicit, $A^{U,V}$ has the form
    \begin{equation*}
        A^{U,V} = G_{Q_U}^{V}\cdot F_{Q_U} \cdot \cdots \cdot G_1^{V}\cdot F_{1} \cdot G_0^V,
    \end{equation*}
    where $Q_U = O\rbra{1/\varepsilon}$, $F_1, \dots, F_{Q_U}$ are either \mbox{(controlled-)$U$} or \mbox{(controlled-)$U^\dag$} acting on certain qubits, and $G_0^V,\ldots, G_{Q_U}^{V}$ are unitary operators independent of $U$.
    Across all the unitary operators $G_0^V,\ldots,G_{Q_U}^V$, the total number of queries to \mbox{(controlled-)$V$} and \mbox{(controlled-)$V^\dag$} is $Q_V=O\rbra{1/\varepsilon}$ and the total number of elementary quantum gates is $O\rbra{k/\varepsilon}$, because here $n=2k+1$.
    By \cref{thm:query-to-sample}, there is a quantum channel $\mathcal{E}^V$ of the form
    \begin{equation*}
        \mathcal{E}^V = \mathcal{G}_{Q_U}^{V}\circ \mathcal{F}_{Q_U} \circ \cdots \circ \mathcal{G}_1^{V}\circ \mathcal{F}_1 \circ \mathcal{G}_0^V
    \end{equation*}
    such that
    \begin{equation}\label{eq:U_replaced1}
        \Abs*{\mathcal{A}^{U,V} - \mathcal{E}^V}_{\textup{tr}} \leq 0.01,
    \end{equation}
    where $\mathcal{A}^{U,V} \colon \sigma \mapsto A^{U,V} \sigma \rbra{A^{U,V}}^\dag$ is the quantum unitary channel induced by $A^{U,V}$, $\mathcal{G}_j^V$ is the quantum unitary channel induced by the unitary operator $G_j^V$ for each $0 \leq j \leq Q_U$ and $\mathcal{F}_1, \dots, \mathcal{F}_{Q_U}$ are implemented by
    $
        O\rbra{Q_U^2} = O\rbra{1/\varepsilon^2}
    $
    samples of $\rho$ and $\widetilde{O}\rbra{Q_U^8 \poly\rbra{k}} = \widetilde{O}\rbra{\poly\rbra{k}/\varepsilon^8}$ elementary quantum gates in total.

    Let $s_j$ be the number of samples of $\rho$ consumed by $\mathcal{F}_j$ for $1\leq j\leq Q_U$ and $s = s_1+\cdots s_{Q_U} = O\rbra{1/\varepsilon^2}$, then for each $\mathcal{F}_j$, there is a unitary $\hat{F}_j$ such that $\mathcal{F}_j$ is implemented as
    \begin{equation*}
        \mathcal{F}_j\rbra{\sigma} = \tr_{\textsf{anc}}\rbra*{\hat{F}_j\rbra*{\rbra*{\rho^{\otimes s_j}\otimes \ketbra{0}{0}^{\otimes t_j}}_{\textsf{anc}}\otimes \sigma}\hat{F}_j^{\dagger}}
    \end{equation*}
    for any mixed quantum state $\sigma$,
    with the ancilla register initialized in the state $\rho^{\otimes s_j}\otimes \ketbra{0}{0}^{\otimes t_j}$.
    Then, letting $t = t_1+\cdots+t_{Q_U}$, we can construct a unitary operator $B^V$ that uses the same $Q_V = O\rbra{1/\varepsilon}$ queries to \mbox{(controlled-)$V$} and \mbox{(controlled-)$V^\dag$} as $\mathcal{E}^V$ such that
    \begin{equation}\label{eq:U_replaced2}
        \mathcal{E}^V\rbra{\sigma} = \tr_{\textsf{anc}}\rbra*{B^V\rbra*{\rbra*{\rho^{\otimes s}\otimes \ketbra{0}{0}^{\otimes t}}_{\textsf{anc}}\otimes \sigma}\rbra*{B^V}^{\dagger}}
    \end{equation}
    for any mixed quantum state $\sigma$.
    Now the only remaining oracle dependence in $B^V$ is on the single oracle $V$, where the registers containing $\rho^{\otimes s}$ are independent of $V$.
    Therefore, applying \cref{thm:query-to-sample} a second time, now to $B^V$ as a quantum query algorithm with query access to $V$ for the $k$-qubit state $\ketbra{\psi}{\psi}$, yields a quantum channel $\mathcal{K}$ implemented by $O\rbra{Q_V^2} = O\rbra{1/\varepsilon^2}$ samples of $\ketbra{\psi}{\psi}$ and $\widetilde{O}\rbra{Q_V^8 \poly\rbra{k}} = \widetilde{O}\rbra{\poly\rbra{k}/\varepsilon^8}$ additional elementary quantum gates,
    satisfying
    \begin{equation}\label{eq:UV_replaced}
        \Abs*{\mathcal{B}^{V} - \mathcal{K}}_{\textup{tr}} \leq 0.01,
    \end{equation}
    where $\mathcal{B}^V \colon \sigma \mapsto B^V \sigma \rbra{B^V}^\dag$ is the quantum unitary channel induced by $B^V$. 

    Summarizing all the results above, we define a quantum channel
    \begin{equation*}
        \mathcal{J}(\sigma) = \tr_{\textsf{anc}}\rbra*{\mathcal{K}\rbra*{\rbra*{\rho^{\otimes s}\otimes \ketbra{0}{0}^{\otimes t}}_{\textsf{anc}}\otimes \sigma}}
    \end{equation*}
    for any input state $\sigma$.
    Since $s = O\rbra{1/\varepsilon^2}$, $\mathcal{J}$ can be implemented using $O\rbra{Q_U^2} + O\rbra{Q_V^2} = O\rbra{1/\varepsilon^2}$ samples of $\rho$ and $\ketbra{\psi}{\psi}$ and $\widetilde{O}\rbra{Q_U^8 \poly\rbra{k}} + \widetilde{O}\rbra{Q_V^8 \poly\rbra{k}} + O\rbra{k/\varepsilon} = \widetilde{O}\rbra{\poly\rbra{k}/\varepsilon^8}$ elementary quantum gates.
    By invoking the contractivity of trace norm distance and \cref{eq:U_replaced1,eq:U_replaced2,eq:UV_replaced}, we have
    \begin{equation*}
        \Abs*{\mathcal{A}^{U,V} - \mathcal{J}}_{\tr} \leq 0.02.
    \end{equation*}
    We assume that $A^{U, V}$ acts on two registers $\mathsf{O}$ (for output) and $\mathsf{W}$ (for work). 
    To retrieve information from the output state $A^{U, V}\ket{\bar 0}_{\mathsf{OW}}$, we denote the computational basis measurement on $\mathsf{O}$ by $\cbra{\Pi_x = \ketbra{x}{x}_{\mathsf{O}} \otimes I_{\mathsf{W}}}$, where $x$ ranges over all possible (finitely many) outcomes. 
    Then, the probability that $A^{U, V}$ (on input $\ket{\bar 0}_{\mathsf{OW}}$) outputs $x$ is given by
    \[
    p_x = \tr\rbra*{\Pi_x \mathcal{A}^{U,V}\rbra{\ketbra{\bar0}{\bar0}_{\mathsf{OW}}}} = \Abs*{\Pi_x A^{U, V}\ket{\bar 0}_{\mathsf{OW}}}^2.
    \]
    Let $\tilde p_x$ be the probability that $\mathcal{J}$ (on input $\ket{\bar 0}_{\mathsf{OW}}$) outputs $x$, i.e., 
    \[
    \tilde p_x = \tr\rbra*{ \Pi_x \mathcal{J}\rbra{\ketbra{\bar 0}{\bar 0}_{\mathsf{OW}}} }.
    \]
    Then, the total variation distance between the two probability distributions $p$ and $\tilde p$ can be bounded by
    \[
    d_{\textup{TV}}\rbra{p, \tilde p} = \frac{1}{2} \sum_{x} \abs*{p_x - \tilde p_x} \leq \frac{1}{2} \Abs*{\mathcal{A}^{U,V} - \mathcal{J}}_{\tr} \leq 0.01.
    \]
    According to the condition that $A^{U,V}$ can estimate $\mathrm{F}\rbra{\rho, \ket{\psi}}$ to within additive error $\varepsilon$ with probability at least $0.99$, we have
    \[
    \sum_{x \colon \abs{x - \mathrm{F}\rbra{\rho, \ket{\psi}}} \leq \varepsilon} p_x \geq 0.99. 
    \]
    Therefore, $\mathcal{J}$ can estimate $\mathrm{F}\rbra{\rho, \ket{\psi}}$ to within additive error $\varepsilon$ with probability
    \begin{align*}
        \sum_{x \colon \abs{x - \mathrm{F}\rbra{\rho, \ket{\psi}}} \leq \varepsilon} \tilde p_x 
        & \geq \sum_{x \colon \abs{x - \mathrm{F}\rbra{\rho, \ket{\psi}}} \leq \varepsilon} \rbra*{p_x - \abs*{p_x - \tilde p_x}} \\
        & \geq \rbra*{\sum_{x \colon \abs{x - \mathrm{F}\rbra{\rho, \ket{\psi}}} \leq \varepsilon} p_x} - 2 d_{\textup{TV}}\rbra{p, \tilde p} \\
        & \geq 0.99 - 2 \cdot 0.01 \\
        & = 0.97.
    \end{align*}
\end{proof}

\begin{remark}[The necessity of the ``purified quantum query access'' required in \cref{thm:main}] \label{remark:sample}
    Technically, the proof of \cref{thm:est-fid-sampl} is by converting the query complexity $O\rbra{1/\varepsilon}$ in \cref{thm:fidelity_to_pure} to a sample complexity of $O\rbra{1/\varepsilon^2}$ by quantum sample-to-query lifting (see \cref{thm:query-to-sample}). 
    One may wonder if the sample complexity $O\rbra{1/\varepsilon^2}$ for estimating the fidelity $\mathrm{F}\rbra{\ket{\varphi}, \ket{\psi}}$ between two pure states can be obtained in the same way using the quantum query algorithm for pure-state fidelity estimation with query complexity $O\rbra{1/\varepsilon}$ given in \cite{Wan24}. 
    Here, we clarify that the result in \cite{Wan24}, however, is not sufficient to establish the sample complexity for pure-state fidelity estimation. 
    This is because the query complexity given in \cref{thm:fidelity_to_pure} is under the purified quantum query access model (see \cref{def:query}) to which the quantum sample-to-query lifting (\cref{thm:query-to-sample}) can apply, whereas the query complexity result given in \cite{Wan24} assumes quantum query oracles of the form $U\ket{0} = \ket{\psi}$ which are not under the purified quantum query access model. 
\end{remark}

\subsection{Generalizations} \label{sec:general-sample}

Using the same argument as for the proof of \cref{thm:est-fid-sampl} with the result in \cref{thm:trace_rho_sigma2}, we have the following generalization of \cref{thm:est-fid-sampl}. 

\begin{theorem} \label{thm:trace_rho_sigma2-sample}
    Suppose that $\rho$ and $\sigma$ are two unknown $k$-qubit mixed quantum states. 
    For $\varepsilon \in \rbra{0, 1}$, there is a quantum sample algorithm that estimates $\sqrt{\tr\rbra{\rho\sigma^2}}$ to within additive error $\varepsilon$ with probability at least $2/3$ using $O\rbra{1/\varepsilon^2}$ samples of $\rho$ and $\sigma$ and $\widetilde{O}\rbra{\poly\rbra{k}/\varepsilon^8}$ elementary quantum gates. 
\end{theorem}

\begin{proof}
    The proof is based on \cref{thm:trace_rho_sigma2} and uses the same argument as the proof of \cref{thm:est-fid-sampl} (which uses \cref{thm:fidelity_to_pure}). 
\end{proof}

\section{Lower Bounds} \label{sec:lb}

For completeness, we discuss the complexity lower bounds for estimating the fidelity. 
In fact, as noted in \cite{Wan24}, when both quantum states are pure, (i) given query access, a query complexity lower bound of $\Omega\rbra{1/\varepsilon}$ is implied in \cite{BBC+01,NW99}, and (ii) given sample access, a sample complexity lower bound of $\Omega\rbra{1/\varepsilon^2}$ was given in \cite{ALL22}. 

\begin{theorem} [Adapted from \cite{BBC+01,NW99,ALL22}]\label{thm:lower-bound-pure} We have the following lower bounds for fidelity estimation. 
\begin{enumerate}[label=(\arabic*)]
    \item \textbf{Query complexity} \cite{BBC+01,NW99}: Given purified quantum query access to quantum states $\rho$ and $\ket{\psi}$, any quantum query algorithm that estimates $\mathrm{F}\rbra{\rho, \ket{\psi}}$ to within additive error $\varepsilon$ requires query complexity $\Omega\rbra{1/\varepsilon}$, even if $\rho$ is pure. \label{item:pure-query}
    \item \textbf{Sample complexity} \cite{ALL22}: Given sample access to quantum states $\rho$ and $\ket{\psi}$, any quantum sample algorithm that estimates $\mathrm{F}\rbra{\rho, \ket{\psi}}$ to within additive error $\varepsilon$ requires sample complexity $\Omega\rbra{1/\varepsilon^2}$, even if $\rho$ is pure. \label{item:pure-sample}
\end{enumerate}
\end{theorem}

We strengthen this lower bound so that it applies to the case where $\rho$ is a mixed quantum state of an arbitrary rank. 

\begin{theorem} \label{thm:lower-bound-rank}
We have the following lower bounds for fidelity estimation. 
\begin{enumerate}[label=(\arabic*)]
    \item \textbf{Query complexity}: Given purified quantum query access to quantum states $\rho$ and $\ket{\psi}$, any quantum query algorithm that estimates $\mathrm{F}\rbra{\rho, \ket{\psi}}$ to within additive error $\varepsilon$ requires query complexity $\Omega\rbra{1/\varepsilon}$, even if $\rho$ is of an arbitrary rank. \label{item:query}
    \item \textbf{Sample complexity}: Given sample access to quantum states $\rho$ and $\ket{\psi}$, any quantum sample algorithm that estimates $\mathrm{F}\rbra{\rho, \ket{\psi}}$ to within additive error $\varepsilon$ requires sample complexity $\Omega\rbra{1/\varepsilon^2}$, even if $\rho$ is of an arbitrary rank. \label{item:sample}
\end{enumerate}
\end{theorem}

To show \cref{thm:lower-bound-rank}, we first prove the sample complexity lower bound by the Helstrom-Holevo bound \cite{Hel67,Hol73}, and then prove the query complexity lower bound by quantum sample-to-query lifting \cite{WZ25a,WZ25b,TWZ25}.

\subsection{Sample complexity lower bounds}

To show \cref{thm:lower-bound-rank}\ref{item:sample}, we need the Helstrom-Holevo bound \cite{Hel67,Hol73} as follows. 
Here, we use the version in \cite{Wil13}.

\begin{theorem}[Helstrom-Holevo bound, cf.\ {\cite[Section 9.1.4]{Wil13}}] \label{thm:hh}
    Let $\varrho$ be a mixed quantum state that is either $\rho_0$ or $\rho_1$ with equal probability. 
    By measuring $\varrho$ with the quantum measurement $\Lambda = \cbra{\Lambda_0, \Lambda_1}$ with $\Lambda_0, \Lambda_1$ positive semi-definite and $\Lambda_0 + \Lambda_1 = I$, the success probability of distinguishing whether $\varrho = \rho_0$ or $\varrho = \rho_1$ is bounded by
    \[
    p_{\textup{succ}} = \frac{1}{2}\tr\rbra{\Lambda_0\rho_0} + \frac{1}{2}\tr\rbra{\Lambda_1\rho_1} \leq \frac{1}{2} + \frac{1}{4} \Abs{\rho_0 - \rho_1}_1,
    \]
    where $\Abs{A}_1 = \tr\rbra{\sqrt{A^\dag A}}$ and $\frac{1}{2}\Abs{\rho_0-\rho_1}_1$ is the trace distance between $\rho_0$ and $\rho_1$.
\end{theorem}

We also need the Fuchs-van de Graaf inequality \cite{FvdG99} for trace distance and fidelity. 

\begin{theorem}[Fuchs-van de Graaf inequality, {\cite[Theorem 1]{FvdG99}}] \label{thm:FvdG99}
    For any two mixed quantum states $\rho$ and $\sigma$ of the same dimension, 
    \[
    1 - \mathrm{F} \rbra{\rho, \sigma} \leq \frac{1}{2}\Abs{\rho-\sigma}_1 \leq \sqrt{1 - \mathrm{F}^2 \rbra{\rho, \sigma}}.
    \]
\end{theorem}

Now we are ready to prove the sample complexity lower bound for fidelity estimation. 

\begin{proof}[Proof of \cref{thm:lower-bound-rank}\ref{item:sample}]
    Let $r = \rank\rbra{\rho} \geq 2$ be an arbitrary rank parameter. 
    We consider the problem of distinguishing the two quantum states 
    \begin{equation} \label{eq:def-rho-pm}
    \rho^{\pm} = \sum_{j=1}^{n} p_j^{\pm} \ketbra{j}{j},
    \end{equation}
    where 
    \begin{equation} \label{eq:def-prob}
    p^{\pm}_1 = p \pm \varepsilon, ~ p^{\pm}_j = \frac{1-p\mp\varepsilon}{r-1} \textup{ for } 2 \leq j \leq r, \textup{ and } p^{\pm}_{j} = 0 \textup{ for } r < j \leq n,
    \end{equation}
    where $p \in \rbra{0, 1}$ is an arbitrary constant. 
    It can be verified that
    \begin{equation} \label{eq:fidelity-pm}
        \mathrm{F}\rbra{\rho^+, \rho^-} = \sqrt{p^2 - \varepsilon^2} + \sqrt{\rbra{1 - p}^2 - \varepsilon^2} \geq 1 - \Theta\rbra{\varepsilon^2}.
    \end{equation}
    By \cref{thm:hh}, any quantum algorithm that determines whether an unknown quantum state $\varrho$ is $\rho^+$ or $\rho^-$ with success probability at least $2/3$ using $S$ samples of $\varrho$ requires 
    \[
    \frac{2}{3} \leq \frac{1}{2} + \frac{1}{4}\Abs*{\rbra{\rho^+}^{\otimes S} - \rbra{\rho^-}^{\otimes S}}_1,
    \]
    which gives (using \cref{thm:FvdG99,eq:fidelity-pm})
    \[
    S \geq \Omega\rbra*{\frac{1}{1-\mathrm{F}\rbra{\rho^+, \rho^-}}} \geq \Omega\rbra*{\frac{1}{\varepsilon^2}}.
    \]

    On the other hand, to determine whether an unknown quantum state $\varrho$ is $\rho^+$ or $\rho^-$, we can estimate the fidelity $\mathrm{F}\rbra{\varrho, \ket{\psi}}$ with $\ket{\psi} = \ket{1}$.
    Note that $\mathrm{F}\rbra{\rho^{\pm}, \ket{\psi}} = \sqrt{p\pm\varepsilon} = \sqrt{p} \pm \Theta\rbra{\varepsilon}$.
    Therefore, any quantum algorithm for estimating $\mathrm{F}\rbra{\varrho, \ket{\psi}}$ to within additive error $\Theta\rbra{\varepsilon}$ can be used to distinguish $\rho^+$ and $\rho^-$, thereby requiring sample complexity $\Omega\rbra{1/\varepsilon^2}$, even if $\varrho$ is of rank $r$. 
\end{proof}

\subsection{Query complexity lower bounds}

To show \cref{thm:lower-bound-rank}\ref{item:query}, we need the quantum sample-to-query lifting theorem \cite{WZ25a,WZ25b,TWZ25} as follows. 
Here, we use the version in \cite{CWZ25}. 

\begin{theorem}[Quantum sample-to-query lifting, cf.\ {\cite[Theorem 1.1]{CWZ25}}] \label{thm:lifting}
    Let $\mathcal{P}$ be a promise problem of quantum state testing. 
    Then, 
    \[
    \mathsf{Q}\rbra{\mathcal{P}} = \Omega\rbra*{\sqrt{\mathsf{S}\rbra{\mathcal{P}}}},
    \]
    where $\mathsf{Q}\rbra{\mathcal{P}}$ is the quantum query complexity of $\mathcal{P}$ in the purified quantum query access model and $\mathsf{S}\rbra{\mathcal{P}}$ is the sample complexity of $\mathcal{P}$. 
\end{theorem}

Now we are ready to prove the query complexity lower bound for fidelity estimation.

\begin{proof}[Proof of \cref{thm:lower-bound-rank}\ref{item:query}]
    Let $r = \rank\rbra{\rho} \geq 2$ be an arbitrary rank parameter. 
    In the proof of \cref{thm:lower-bound-rank}\ref{item:sample}, we have shown that $\mathsf{S}\rbra{\textsc{Dis}_{\rho^+, \rho^-}} = \Omega\rbra{1/\varepsilon^2}$, where $\textsc{Dis}_{\rho^+, \rho^-}$ is the problem of distinguishing the two quantum states $\rho^+$ and $\rho^-$ defined in \cref{eq:def-rho-pm}. 
    By \cref{thm:lifting}, we have $\mathsf{Q}\rbra{\textsc{Dis}_{\rho^+, \rho^-}} = \Omega\rbra{\sqrt{\mathsf{S}\rbra{\textsc{Dis}_{\rho^+, \rho^-}}}} = \Omega\rbra{1/\varepsilon}$. 

    On the other hand, using the same argument, to determine whether an unknown quantum state $\varrho$ is $\rho^+$ or $\rho^-$, we can estimate the fidelity $\mathrm{F}\rbra{\varrho, \ket{\psi}}$ with $\ket{\psi} = \ket{1}$.
    Note that $\mathrm{F}\rbra{\rho^{\pm}, \ket{\psi}} = \sqrt{p\pm\varepsilon} = \sqrt{p} \pm \Theta\rbra{\varepsilon}$.
    Therefore, any quantum algorithm for estimating $\mathrm{F}\rbra{\varrho, \ket{\psi}}$ to within additive error $\Theta\rbra{\varepsilon}$ can be used to distinguish $\rho^+$ and $\rho^-$, thereby requiring query complexity $\Omega\rbra{1/\varepsilon}$, even if $\varrho$ is of rank $r$. 
\end{proof}

\begin{remark}
    In an earlier version \cite{FW25c} of this paper, a different proof of \cref{thm:lower-bound-rank}\ref{item:query} was provided, which is achieved by reducing the problem of distinguishing probability distributions to fidelity estimation and using the quantum query complexity lower bound in \cite{Bel19}. 
    Specifically, the problem for the reduction is to distinguish the two probability distributions $p^{\pm}$ defined by \cref{eq:def-prob}, which requires quantum query complexity $\Omega\rbra{1/d_{\textup{H}}\rbra{p^+, p^-}} = \Omega\rbra{1/\varepsilon}$ by \cite[Theorem 4]{Bel19}, where $d_{\textup{H}}\rbra{p^+, p^-}$ is the Hellinger distance. 
    In comparison, our current proof provides a unified proof for both sample and query complexities by quantum sample-to-query lifting (\cref{thm:lifting}). 
    To see the similarity between the two proofs, note that $d_{\textup{H}}\rbra{p^+, p^-} = \sqrt{1 - \mathrm{F}\rbra{\rho^+, \rho^-}} = \Theta\rbra{\varepsilon}$. 
    On the other hand, both the two proofs can be understood from the same perspective, as the quantum query complexity lower bound for distinguishing probability distributions in \cite[Theorem 4]{Bel19} was shown to have another proof based on quantum sample-to-query lifting in \cite[Theorem 2.1]{CWZ25}. 
\end{remark}

\section{Discussion} \label{sec:discussion}

In this paper, we present two optimal quantum algorithms for estimating the fidelity of a mixed state to a pure state, where one is query-optimal and the other is sample-optimal. 
Our approaches are simple, which moreover estimates the quantity $\sqrt{\tr\rbra{\rho\sigma^2}}$ that has not been commonly considered in the literature. 
Here, we raise some questions for future research. 
\begin{enumerate}
    \item Given purified quantum query access, can we encode $\sqrt{\tr\rbra{\rho\sigma}}$ rather than $\sqrt{\tr\rbra{\rho\sigma^2}}$ in the amplitudes? 
    In addition to its relationship with fidelity estimation, this is related to the \textit{Frobenius norm} of a quantum state, $\Abs{\rho}_{F} = \sqrt{\tr\rbra{\rho^2}}$, which is the square root of purity. 
    \item Given sample access, can we estimate the fidelity $\mathrm{F}\rbra{\rho, \ket{\psi}}$ with time complexity better than the result $\widetilde{O}\rbra{\poly\rbra{k}/\varepsilon^8}$ given in \cref{thm:est-fid-sampl}?
    \item Can we find optimal distributed quantum algorithm for estimating the fidelity $\mathrm{F}\rbra{\ket{\varphi}, \ket{\psi}} = \abs{\braket{\varphi}{\psi}}$ between two pure states, generalizing the results in \cite{AS25,GHYZ24} which estimate the squared fidelity $\mathrm{F}^2\rbra{\ket{\varphi}, \ket{\psi}} = \abs{\braket{\varphi}{\psi}}^2$?
\end{enumerate}

\section*{Acknowledgment}

The work of Wang Fang was supported by the Engineering and Physical Sciences Research Council under Grant EP/X025551/1.
The work of Qisheng Wang was supported in part by the Startup Funding from Shanghai Jiao Tong University and in part by the Engineering and Physical Sciences Research Council under Grant EP/X026167/1.
Part of the work of Qisheng Wang was done when the author was with the School of Informatics, University of Edinburgh, Edinburgh, United Kingdom.

\addcontentsline{toc}{section}{References}

\bibliographystyle{plainurl}
\bibliography{main}

\end{document}